\newcommand{\be}{\begin{equation}}
\newcommand{\ee}{\end{equation}}
\newcommand{\ba}{\begin{array}}
\newcommand{\ea}{\end{array}}
\newcommand{\bea}{\begin{eqnarray}}
\newcommand{\eea}{\end{eqnarray}}
\newcommand{\calH}{{\cal H }}
\newcommand{\calE}{{\cal E }}
\newcommand{\calZ}{{\cal Z }}
\newcommand{\calS}{{\cal S }}
\newcommand{\calG}{{\cal G }}
\newcommand{\calY}{{\cal Y }}
\newcommand{\calV}{{\cal V }}
\newcommand{\ZZ}{\mathbb{Z}}
\newcommand{\RR}{\mathbb{R}}
\newcommand{\la}{\langle}
\newcommand{\ra}{\rangle}
\newcommand{\nn}{\nonumber}
\newcommand{\trace}{\mathop{\mathrm{Tr}}\nolimits}
\newtheorem{dfn}{Definition}
\newtheorem{lemma}{Lemma}
\newtheorem{prop}{Proposition}
\newtheorem{theorem}{Theorem}
\newtheorem{corol}{Corollary}
\numberwithin{lemma}{section}
\numberwithin{corol}{section}
\numberwithin{prop}{section}
\numberwithin{dfn}{section}
\numberwithin{equation}{section}
\title{Topological quantum order: stability under local perturbations}
\author{Sergey Bravyi\footnote{IBM Watson Research Center, Yorktown Heights NY 10594 (USA); {\em sbravyi@us.ibm.com}},  \, \,
Matthew Hastings\footnote{Microsoft Research Station Q, CNSI Building, University of California, Santa Barbara, CA,
93106 (USA);  \mbox{\em \, \, \, \, mahastin@microsoft.com}}, \, \,
and  \, \,
Spyridon Michalakis\footnote{T-4 and CNLS, LANL - Los Alamos, NM, 87544 (USA); {\em spiros@lanl.gov}}
}
\begin{document}

\maketitle

\abstract{We study zero-temperature stability of topological phases of matter under weak time-independent perturbations.
Our results apply to quantum spin Hamiltonians that can be written  as a sum of geometrically local commuting projectors
on a $D$-dimensional lattice with certain topological order conditions.  Given such a Hamiltonian $H_0$ we prove  that  there exists a constant threshold $\epsilon>0$ such that for any
perturbation $V$ representable as a sum of short-range bounded-norm interactions the perturbed Hamiltonian
$H=H_0+\epsilon V$ has well-defined spectral bands originating from $O(1)$ smallest
eigenvalues of $H_0$. These bands are separated from the rest of the spectrum and from each other by a constant gap. The band originating from the smallest eigenvalue of $H_0$ has exponentially small width (as a function of the lattice size).

Our proof  exploits a discrete version of Hamiltonian flow equations,
the theory of relatively bounded operators, and the Lieb-Robinson bound.
 }

\newpage

\tableofcontents

\newpage

\section{Introduction}
The traditional classification of different phases of matter due to Landau rests on symmetry breaking.
Given a pair of gapped Hamiltonians $H_1,H_2$ with some symmetry group $G$,
the ground states  of $H_1$ and $H_2$ were considered to be in different phases if their symmetry breaking patterns are different. The discovery of topologically ordered phases, however, changes this paradigm.
Models such as Kitaev's toric code~\cite{tc} have ``topologically non-trivial" ground states
despite lacking any symmetry breaking. Such states cannot be changed into a ``topologically trivial" state such as
a product state  by any unitary locality-preserving operator~\cite{bhv}.

One possible approach to classifying topological phases is to call a pair of gapped Hamiltonians $H_1,H_2$
topologically equivalent iff  it is possible to connect $H_1$ and $H_2$ by a continuous path in the space of local gapped Hamiltonians. Using the idea of quasi-adiabatic continuation~\cite{hwen}, one can describe the evolution of the ground state
subspace along such a path by a unitary locality-preserving operator.
In particular ground state degeneracy and the geometry of ``logical operators" acting on the ground subspace is the same for
$H_1$ and $H_2$.

Most of the Hamiltonians describing TQO models such as
 Kitaev's quantum double model~\cite{tc} or Levin-Wen string-net model~\cite{LW} are not quite physical since they involve interactions affecting more than two spins at a time. One may hope however that such models emerge
 as low-energy effective Hamiltonians describing some simpler high-energy theories~\cite{Duan03,Koenig,BDLT08}.
 For example, the toric code model
 with four-spin interactions can be ``implemented" as the fourth-order effective Hamiltonian describing low-energy limit
 of the honeycomb model~\cite{Kitaev05} which involves only two-spin interactions. The higher-order corrections to the
 effective Hamiltonian must be regarded as a perturbation. Thus in order to show that the honeycomb model is topologically
equivalent to the toric code (in the $J_z\gg J_x,J_y$ phase) one has to prove that the spectral gap in the toric code Hamiltonian
does not close in a presence of weak perturbations $V$ that can be represented as a sum of bounded-norm
short-range (exponentially decaying) interactions.

Even if one leaves aside the question of how multi-spin interactions can be implemented in a lab, one has to worry
about precision up to which an ideal model Hamiltonian can be approximated in a real life.
If the presence or absence of the gap depends on tiny variations of the Hamitonian parameters that are beyond
experimentalist's control, the distinction between gapped and gapless Hamiltonians is meaningless.
The best we can hope for is to approximate individual interactions of the ideal model with some constant
precision $\epsilon$  independent of the system size $N$. Accordingly, the ideal Hamiltonian can be approximated only up to an extensive error $O(\epsilon N)$. Proving stability of topological phases
thus reduces to proving that the spectral gap of the ideal TQO models does not close in the presence
of such extensive perturbations.

Currently, the tools for proving lower bounds on the spectral gap are fairly limited.  For example, one of the outstanding problems
in mathematical physics is to prove the existence of a spectral gap for the spin-$1$ Heisenberg chain, making rigorous the arguments of Haldane~\cite{haldane}.  Some progress toward this was obtained by Yarotsky~\cite{yarot}, who showed the stability of the gap near
the AKLT point~\cite{aklt}.
Yarotsky's tools however are limited to  perturbations of Hamiltonians which
are topologically trivial.  Thus, new methods are needed to analyze topologically ordered phases.  Some partial
results were recently obtained by Trebst et al~\cite{trebst} and Klich~\cite{klich} who proved gap stability for the toric code under a special type of perturbations diagonal in the $z$-basis as well as for anyon lattices on a sphere.

In the present paper we succeed in proving gap stability under  generic local perturbations.  Our results are valid not just for the toric code, but more generally for any Hamiltonian which can be written as a sum of geometrically
local commuting projectors on a $D$-dimensional lattice with certain topological order conditions that we define
later.  This includes models such as Kitaev's quantum double model~\cite{tc} and the Levin-Wen string-net model~\cite{LW}.
Furthermore, we prove stability of the spectral gaps separating sufficiently low-lying eigenvalues
of the unperturbed Hamiltonian. In the case of 2D models with anyonic excitations it allows us
to define string-like operators that create particle excitations for the perturbed Hamiltonian
and prove stability of invariants describing the braiding statistics of excitations.
We explain how this may be used to adiabatically control a perturbed topological model to perform braiding
operations to manipulate topologically protected quantum information.

\subsection{Summary of results}
\label{subs:summary}
Consider a system composed of  finite-dimensional quantum particles (qudits)
occupying sites of a $D$-dimensional lattice $\Lambda$ of linear size $L$.
The corresponding Hilbert space is a tensor product of the local Hilbert spaces,
$\calH=\bigotimes_{u\in \Lambda} \calH_u$,
$\dim{(\calH_u)}=O(1)$.
Suppose the unperturbed Hamiltonian $H_0$  can be written as a sum of
geometrically local pairwise commuting projectors,
\[
H_0=\sum_{A \subseteq \Lambda} Q_A,
\]
where the sum runs over all subsets of the lattice of diameter $O(1)$
and $Q_A$ is a projector acting non-trivially only on sites of $A$
(one may have $Q_A=0$ for some subsets $A$). The commutativity assumption
implies that all projectors $Q_A$ can be diagonalized in the same basis.
Accordingly, all eigenvalues of $H_0$ are non-negative integers.
We assume that the smallest eigenvalue of $H_0$ is zero, that is,
ground states of $H_0$ are annihilated by every projector $Q_A$.
Such states span the ground subspace $P$,
\[
P=\{ |\psi\ra\in \calH  \, : \, Q_A \, |\psi\ra =0 \quad \mbox{for all $A\subseteq \Lambda$}\}.
\]
For any subset $B\subseteq \Lambda$ we shall also define a local ground subspace as
\[
P_B = \{ |\psi\ra\in \calH \, : \, Q_A \, |\psi\ra =0 \quad \mbox{for all $A\subseteq B$}\}.
\]
We shall use the notations $P$ and $P_B$ both for linear subspaces and for the corresponding projectors.
Note that the projector $P_B$ acts non-trivially only on the subset $B$.

We shall impose two extra conditions on $H_0$ and the ground subspace $P$ that guarantee the gap stability.
Let us first state these conditions informally (see Section~\ref{subs:tqo} for formal definitions):

\begin{enumerate}
\item[] {\bf TQO-1:} The ground subspace $P$ is a quantum  code with a macroscopic distance\footnote{For our purposes
it suffices that the distance grows as a positive power of the lattice size $L$},
\item[] {\bf TQO-2:} Local ground subspaces are consistent with  the global one
\end{enumerate}
Condition TQO-1 is the traditional definition of TQO. It guarantees that a local operator cannot induce transitions
between orthogonal ground states or distinguish a pair of orthogonal ground states from each other.
Thus a local perturbation can lift the ground state degeneracy only in
the $n$-th order of perturbation theory, where $n$ can be made arbitrarily large by increasing the lattice size, see~\cite{tc}.
Surprising, condition TQO-1 by itself is not sufficient for stability, see a simple counter-example in Section~\ref{subs:unstable}.

Condition TQO-2  demands  that a local ground subspace
$P_B$ and the global ground subspace $P$ must be
consistent, namely, the projectors $P_B$ and $P$ must have the same
reductions on any subset $A\subset B$ which is "sufficiently far" from the boundary of $B$.
We need to impose TQO-2 only for regions with trivial topology such a cube or a ball.
The consistency between the global and the local ground subspaces
may be violated for regions with non-trivial topology. For example, if $B$ has a hole,
the local ground subspace $P_B$ may include sectors with a non-trivial topological charge
inside the hole as opposed to the global ground subspace.
Condition TQO-2   by itself is also not sufficient for stability, see a counter-example in Section~\ref{subs:unstable}.

Let us emphasize  that all our results apply also  to the special case when $H_0$ has non-degenerate ground
state. In this case TQO-1 is automatically satisfied since $P$ is a one-dimensional subspace and thus condition TQO-2 alone guarantees the gap stability.

We consider a perturbation $V$ that can be written as a sum of bounded-norm interactions
\[
V=\sum_{r\ge 1}\; \;  \sum_{A\in \calS(r)}  \; \; V_{r,A},
\]
where $\calS(r)$ is a set of cubes of linear size $r$ and $V_{r,A}$ is an operator acting on sites of $A$.
We assume that the magnitude of interactions decays exponentially for large $r$,
\[
\max_{A\in \calS(r)} \|V_{r,A}\|\le Je^{-\mu r},
\]
where $J,\mu>0$ are some constants independent of $L$.
Our main result is the following.
\begin{theorem}
Suppose $H_0$ obeys TQO-1,2. Then there exist constants $J_0,c_1,c_2>0$ depending only on $\mu$
and the spatial dimension $D$ such that for all $J\le J_0$ the spectrum of $H_0+V$ is
contained (up to an overall energy shift) in the union of intervals $\bigcup_{k\ge 0} I_k$, where
$k$ runs over the spectrum of $H_0$ and
\[
I_k=\{\lambda\in \RR\, : \, k(1-c_1J)-\delta\le \lambda\le k(1+c_1J)+\delta \},
\]
and
\[
\delta =poly(L) \exp{(-c_2 L^{3/8})}.
\]
\label{thm:main}
\end{theorem}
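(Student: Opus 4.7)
The plan is to construct a sequence of quasi-local unitary conjugations $U_1,\ldots,U_N$ such that $U_N\cdots U_1\,(H_0+V)\,U_1^\dagger\cdots U_N^\dagger = H_0 + D + R$, where $D$ commutes exactly with $H_0$ (so it preserves each spectral projector $\Pi_k$ of $H_0$), $\|R\|\le\delta$, and $D$ is relatively $H_0$-bounded with relative bound $c_1 J$. Once such a decomposition is in hand, Weyl's inequality applied in each invariant subspace $\mathrm{Ran}(\Pi_k)$ places the spectrum of the transformed, and therefore of the original, Hamiltonian inside the claimed intervals $I_k$.

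At each iteration I would carry out one step of a discrete Schrieffer--Wolff-type flow. Decompose the current perturbation $V^{(n)}$ into a block-diagonal piece $D^{(n)}=\sum_k \Pi_k V^{(n)}\Pi_k$ and a block-off-diagonal remainder, and eliminate the off-diagonal part by a unitary $e^{iS^{(n)}}$ whose generator inverts $[H_0,\cdot]$ on the off-diagonal sector. Because $H_0$ is local and has integer spectrum, $S^{(n)}$ admits an integral representation in terms of $e^{itH_0}$, and the Lieb--Robinson bound shows that $S^{(n)}$ acts quasi-locally; the conjugation $e^{-iS^{(n)}}(\cdot)e^{iS^{(n)}}$ then preserves the representation of the Hamiltonian as a sum of operators supported on cubes with exponentially decaying norms in the cube size. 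The absorbed diagonal $D^{(n)}$ is added to $D$, and the new off-diagonal remainder has norm of order $J_n^2$, so in a quadratically convergent scheme the residual is doubly-exponentially small in the number of iterations $N$.

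The two topological order conditions are what make the diagonal and residual terms tractable. Given a local operator $O_A$ with support $A$, rewrite $\Pi_0 O_A \Pi_0$ by inserting a local ground-space projector $P_B$ for a topologically trivial ball $B\supset A$: TQO-2 states that $P_B$ agrees with $P$ well inside $B$, so the ground-sector contribution of $O_A$ is a scalar up to a correction quasi-exponentially small in $\mathrm{dist}(A,\partial B)$, which absorbs into an overall energy shift together with $R$. For the higher bands, the same reasoning shows that $\Pi_k D \Pi_k$ can only be nontrivial through projectors $Q_A$ that fail to annihilate an $H_0$-eigenstate of energy $k$; since any such eigenstate has at most $O(k)$ frustrated projectors, relative boundedness gives $\|\Pi_k D \Pi_k\| \le c_1 J\, k$, producing the slope of the intervals $I_k$. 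TQO-1 ensures that any operator able to split the ground-space degeneracy must have support of diameter comparable to the code distance, so it cannot appear in the flow until supports have grown to size $\Omega(L^\alpha)$; this forces the ground-space splitting to be as small as $\delta$.

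The main obstacle is the bookkeeping between support growth and the number of iterations. Each conjugation by $e^{iS^{(n)}}$ widens the support of local terms by a Lieb--Robinson cone; to keep the off-diagonal norms $J_n$ decaying quadratically one must choose $N$ together with a sequence of length scales $\ell_n$ so that the cones remain comfortably smaller than $L$, while the tails of $S^{(n)}$ are small enough that the quasi-local expansion of the rotated Hamiltonian reconverges, and so that TQO-2 still applies on the enlarged balls. Optimizing this trade-off — balancing doubly-exponential decay of $J_n$ against polynomial growth of supports and the TQO-2 consistency radius — is what produces the particular stretched exponent in $\delta=\mathrm{poly}(L)\,e^{-c_2 L^{3/8}}$ and is the most delicate part of the argument.
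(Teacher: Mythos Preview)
Your outline has the right overall architecture---an iterated Schrieffer--Wolff/flow-equation scheme with Lieb--Robinson control on support growth---but there is a real gap in the block-diagonalization step, and its resolution is exactly where the paper's argument diverges from yours. You ask that $D=\sum_k \Pi_k V\Pi_k$ \emph{commute with $H_0$}, i.e.\ be block-diagonal with respect to the full spectral decomposition $\{\Pi_k\}_{k\ge 0}$. But the projectors $\Pi_k$ are global, so $\Pi_k V_A\Pi_k$ is not supported anywhere near $A$; after one step the ``diagonal'' part has no local decomposition, and the Lieb--Robinson bookkeeping you invoke for $S^{(n)}$ and for the next iterate collapses. (An integral representation $S=\int F(t)\,e^{itH_0}V_{od}\,e^{-itH_0}\,dt$ does not help here: exactly inverting $[H_0,\cdot]$ on \emph{all} off-diagonal sectors forces $\tilde F(n)=1/n$ at every nonzero integer $n$, and no compactly supported or exponentially decaying $F$ achieves this.)

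The paper sidesteps the problem by block-diagonalizing only with respect to the two-block splitting $P\oplus Q$ (ground subspace vs.\ its complement). This coarser structure \emph{can} be realized locally: one replaces each $V_{r,A}$ by $P_A V_{r,A} P_A + Q_A V_{r,A} Q_A$ using the \emph{local} ground projectors $P_A$, and the generator $S$ is built from the explicit local map $\calE_A(O)=Q_A H_0(A)^{-1}OP_A-\text{h.c.}$ rather than a time integral. TQO-2 enters as the exact statement of Corollary~\ref{corol:tqo3} ($O_AP=0\Rightarrow O_AP_B=0$), not as an approximate ``agrees well inside $B$'' statement; it is what makes the local block structure match the global one. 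The price is that the resulting $W$ does \emph{not} commute with $H_0$, so Weyl's inequality on $\mathrm{Ran}(\Pi_k)$ is unavailable. The paper's substitute is the relative-boundedness machinery of Section~\ref{sec:rel}: Lemma~\ref{lemma:bd_stability} (the syndrome/defect-box argument, which is where TQO-2 is actually used) yields $\|W\psi\|\le b\,\|H_0\psi\|$ with $b=O(J)$, and then the elementary Lemma~\ref{lemma:relb} places the spectrum of $H_0+W$ in $\bigcup_k[k(1-b),k(1+b)]$ directly, without invariant subspaces. Your intuition about ``$O(k)$ frustrated projectors'' is precisely the content of that lemma, but it is an inequality $W^2\le b^2 H_0^2$, not a statement about $\Pi_k$-blocks. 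So the missing idea is: block-diagonalize only against $P$, keep locality via $P_A,Q_A$, and replace Weyl by relative boundedness.
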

In other words, the perturbation $V$ changes positive eigenvalues of $H_0$ at most by a constant
factor $1\pm c_1J$ (neglecting the exponentially small correction $\delta$) while
the smallest eigenvalue $k=0$ is transformed into a band $I_0$ of exponentially small width $2\delta$,
see Fig.~\ref{fig:energy}.
One can easily check that
for any fixed $k$ the band $I_k$ is separated from all other bands $I_m$, $m\ne k$, by a gap at least
$1/2$ provided that $J<J_k$, where
\[
J_k=\frac1{c_1(4k+2)}.
\]
Thus the bands originating from eigenvalues $0,1,\ldots,k$ of $H_0$ are separated from each other
and from the rest of the spectrum by a gap at least $1/2$ provided that $J<J_k$.

In the case when excitations of $H_0$ are anyons, one can infer all topological invariants
such as $S$, $R$, and $F$-matrices by evaluating fusion and braiding diagrams with only a few
particles (for example $4$ particles suffice to compute all $F$ matrices).
Accordingly, any matrix element of, say, $F$-matrix, can be represented as an expectation value
$\la \psi_0|O_m\ldots O_2 O_1|\psi_0\ra$ where $|\psi_0\ra$ is the ground state and $O_i$
are operators creating pairs of excitations from the ground state, moving, fusing, and annihilating them.
Our stability result for excited states with $O(1)$ excitations allows us to
construct quasi-adiabatic continuation of operators $O_i$ thus explicitly demonstrating that the perturbed Hamiltonian
has the same $S,R$, and $F$ matrices as the ideal one, see Section~\ref{sec:continue}.

\begin{center}
\begin{figure}[htb]
\centerline{
\mbox{
 \includegraphics[height=4cm]{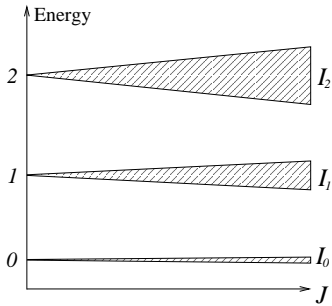}
 }}
 \caption{Energy bands $I_k$ describing the spectrum  of a perturbed Hamiltonian $H_0+V$.}
 \label{fig:energy}
\end{figure}
\end{center}

\subsection{Sketch of the stability proof}
Let us sketch the main steps of the proof of Theorem~\ref{thm:main}.
We start from proving the theorem for a special class of perturbation $V$
such that all individual  interactions $V_{r,A}$ preserve the ground subspace $P$,
that is, $[P,V_{r,A}]=0$.
We call such perturbations {\em block-diagonal}. In Section~\ref{sec:rel} we prove that
block-diagonal perturbations are {\em relatively bounded} by $H_0$, that is,
$\| V \psi\|\le b \|H_0\psi\|$ for any state $|\psi\ra\in \calH$ and for some coefficient
$b=O(J)$. Here for simplicity we ignore some exponentially small corrections.
A nice feature of relatively bounded perturbations is that
the spectrum of a perturbed Hamiltonian $H_0+V$ is contained in the union
of intervals $I_k$ where $k$ runs over the spectrum of $H_0$ and $I_k=(k(1-b),k(1+b))$,
see Section~\ref{subs:rel}.
The proof of the relative boundness is rather elementary and uses certain decomposition of the Hilbert space
in terms of {\em syndrome subspaces} which is a standard tool in the theory of quantum error correcting codes.
In order to get a strong enough bound on the coefficient $b$ we use a novel technique of ``coarse-graining"
the syndrome subspaces, see Section~\ref{subs:bd_stability} for details.

In the second part of the proof we reduce generic perturbations $V$ to block-diagonal perturbations.
Specifically, we construct a unitary operator $U$ such that $U(H_0+V)U^\dag\approx H_0+W$,
where $W$ is a block-diagonal perturbation. Since $U$ does not change eigenvalues,
we can use the techniques described above to analyze the spectrum of $H_0+V$.
The operator $U$ is constructed using a discrete version of {\em Hamiltonian flow equations}
developed by Glazek, Wilson, and Wegner~\cite{flow}.
Specifically, we define a hierarchy of  Hamiltonians $H(n)=H_0+V(n) +W(n)$
labeled by an integer level $n\ge 0$, such that $W(n)$ is a block-diagonal perturbation
while $V(n)$ is a generic perturbation. We start at the level $n=0$ with the perturbed  $H_0+V$, that is,
$V(0)=V$ and $W(0)=0$. As we go to higher levels, the Hamiltonian $H(n)$ becomes more close
to a block-diagonal form.
The transformation from $H(n)$ to $H(n+1)$ is described
by a unitary operator $U(n)$ that block-diagonalizes $H(n)$ up to errors of order $V(n)^2$.
These errors are dealt with at the next level of the hierarchy, see Section~\ref{sec:flow} for details.
We construct $U(n)$ by solving a linearized block-diagonalization problem, see
Section~\ref{sec:lbd}. The solution can be easily constructed in terms of the series while convergence of the
series follows from the fact that $W(n)$ is relatively bounded by $H_0$.

We prove that the strength of $V(n)$ decays doubly-exponentially as a function of $n$, while
$W(n)$ does not change essentially after the first few levels.
We then choose the desired unitary operator $U$ as $U=U(n_{f})\cdots U(1) U(0)$
where the highest level $n_{f}\sim \log{(L)}$ is chosen to make the norm of $V(n_f)$ exponentially small (as a function of $L$).
The most technical  part of the proof is to show that the unitary operators $U(n)$ are locality preserving
such that all Hamiltonians $V(n)$ and $W(n)$ remain sufficiently local.
To this end we first prove that $U(n)$ can be generated by a quasi-local Hamiltonian, see Section~\ref{sec:lbd},
and then employ the Lieb-Robinson bound, see Section~\ref{sec:LR}.

\section{Hamiltonians describing TQO}
\label{sec:tqo}
\subsection{Frustration-free commuting Hamiltonians}
\label{subs:ffc}
To simplify notations we shall restrict ourselves to the spatial dimension $D=2$.
A generalization to an arbitrary $D$ is straightforward.
Let $\Lambda=\ZZ_L\times \ZZ_L$ be a two-dimensional square lattice of linear size $L$ with periodic boundary conditions.
We assume that every site $u\in \Lambda$ is occupied by a finite-dimensional quantum particle (qudit)
such that the Hilbert space describing $\Lambda$ is a tensor product
\[
\calH=\bigotimes_{u\in \Lambda} \calH_u, \quad \dim{\calH_u}=O(1).
\]
Let $\calS(r)$ be a set of all square blocks $A\subseteq \Lambda$ of size $r\times r$,
where $r$ is a positive integer.
Note that $\calS(r)$ contains $L^2$  translations of some elementary square
of size $r\times r$ for all $r<L$, $\calS(L)=\Lambda$, and $\calS(r)=\emptyset$ for $r>L$.
We can always assume that the unperturbed Hamiltonian $H_0$ involves only $2\times 2$ interactions
(otherwise consider a coarse-grained lattice):
\be
\label{H_0}
H_0=\sum_{A\in \calS(2)} G_A.
\ee
There will be three essential restrictions on the form of interactions $G_A$.
Firstly, we require that $G_A$ are {\em pairwise commuting} operators, that is,
\[
G_A \, G_B = G_B \, G_A \quad \mbox{for all $A,B\in \calS(2)$}.
\]
Thus all interactions $G_A$ can be diagonalized in the same basis.
Secondly, we require that $H_0$ is a {\em frustration free} Hamiltonian, that is,
the ground state of $H_0$ minimizes energy of every individual term $G_A$.
Performing an overall energy shift we can always assume that all $G_A$ are positive-semidefinite
operators,
\[
G_A \ge 0.
\]
Then the condition of being frustration-free demands that ground states of $H_0$ are
common zero eigenvectors of every term $G_A$. Thus the ground subspace of $H_0$ is
\be
P=\{ |\psi\ra \in \calH \, : \, G_A\, |\psi\ra =0 \quad \mbox{for all $A\in \calS(2)$}\}.
\ee
Thirdly, we shall assume that every operator $G_A$ has a {\em constant spectral gap}, that is,
the smallest positive eigenvalue of $G_A$ is bounded from below by a constant independent of the lattice size $L$.
We can always normalize the Hamiltonian $H_0$ such that the spectral gap of any $G_A$ is at least $1$.
This is equivalent to a condition
\[
G_A^2\ge G_A.
\]
Let $P_A$ be the projector onto the zero subspace of $G_A$  and $Q_A=I-P_A$.
Note that all the projectors $P_A,Q_A$ are pairwise commuting.
For any square $B\in \calS(r)$, $r\ge 2$ define a projector
onto the local ground subspace
\be
\label{local_projectors}
P_B =\prod_{\substack{A\in \calS(2) \\ A\subseteq B}} P_A
\ee
and $Q_B=I-P_B$.
Note that $P_B$ and $Q_B$ have support on  $B$.
We shall often use the same notation for a subspace and for the corresponding projector.

\subsection{Formal definition of TQO}
\label{subs:tqo}
We shall need two extra property of $H_0$ and the ground subspace $P$ that guarantee the gap stability
and robustness of the ground state degeneracy.
 We shall assume that there exists a constant $c>0$
such that the following conditions hold for some integer $L^*\ge cL$
for all sufficiently large $L$:

\begin{center}
\begin{tabular}{|c|}
\hline \\
{\bf TQO-1:} \parbox[t]{13.5cm}{Let $A\in \calS(r)$ be any square of size $r\le L^*$. Let $O_A$
be any operator acting on $A$. Then
\[
PO_AP =c P
\]
for some complex number $c$.} \\  \\
{\bf TQO-2:} \parbox[t]{13.5cm}{Let $A\in \calS(r)$ be any square of size $r\le L^*$
and let $B\in \calS(r+2)$ be the square that contains $A$ and all nearest neighbors of $A$.
Define reduced density matrices $\rho_A = \trace_{A^c} (P)$ and
$\rho_A^{(B)}=\trace_{A^c} (P_B)$.
Then the kernel of $\rho_A$ coincides with the kernel of $\rho_A^{(B)}$.
}\\ \\
\hline
\end{tabular}
\end{center}

{\em Remark 1.} Using the language of quantum error correcting codes
one can define the {\em minimum distance} of $P$ as
the smallest integer $d$ such that
erasure of any subset of $d$ particles can be corrected
for any encoded state $|\psi\ra \in P$,  see~\cite{BPT09} for details.
Note that TQO-1 holds for $L^*=\lfloor \sqrt{d} \rfloor$ since
the reduced state of any square $A\in \calS(L^*)$ does not depend on the encoded
state. What is less trivial, TQO-1 holds also for $L^*=\Omega(d)$, see~\cite{BPT09}.
Thus $L^*$ coincides with the distance of the code $P$ up to a constant coefficient
(as far as condition TQO-1 is concerned).

{\em Remark 2.} Condition TQO-2 can be easily `proved' if the  excitations of $H_0$
are anyons (since the latter assumption lacks a rigorous formulation, the argument given below is not completely
rigorous either).
 Indeed, in this case we can choose a complete basis of the excited subspace $Q$ such that the basis vectors
correspond to various configurations of anyons.  For non-abelian theories one may have several basis vectors
for a fixed configuration of anyons that describe different fusion channels, see~\cite{Kitaev05}.
Note that any state $|\psi\ra\in P_B$ is a superposition of configurations with no anyons inside $B$.
Since $A$ is a topological trivial region, any such configuration can be prepared from the vacuum
$P$ by some unitary operator $U_{A^c}$ acting on complementary region  $A^c=\Lambda\backslash A$.
Thus $|\psi\ra= U_{A^c}|\psi_0\ra$ for some ground state $|\psi_0\ra\in P$.
Since all ground states $|\psi_0\ra$ have the same reduced matrix on $A$, it means that $|\psi\ra$
and $P$ have the same reduced matrix on $A$. This implies TQO-2.
The above arguments suggest that  TQO-2 holds for all 2D models of  TQO that can be
described by commuting frustration-free
such as quantum double models~\cite{tc} and Levin-Wen string-net models~\cite{LW}.

{\em Remark~3.} As was already mentioned, the consistency between the global and the local ground subspaces
may be violated for regions with non-trivial topology. For example, if $A$ has a hole,
the local ground subspace $P_A$ may include sectors with a non-trivial topological charge
inside the hole as opposed to the global ground subspace.

We shall need the following corollary of TQO-2.
\begin{corol}
\label{corol:tqo3}
Let $A\in \calS(r)$ be any square of size $r\le L^*$ and $O_A$ be any operator acting on $A$
such that $O_A P=0$. Let $B\in \calS(r+2)$ be the square that contains $A$ and all nearest neighbors of $A$.
Then $O_A P_B=0$.
\end{corol}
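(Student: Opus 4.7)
The plan is to bootstrap the hypothesis $O_A P = 0$ into the statement $O_A P_B = 0$ by moving through reduced density matrices on $A$, where TQO-2 is phrased. The key observation is that both $P$ and $P_B$ are projectors, so the condition that $O_A$ kills them is equivalent, via positivity, to the condition that $O_A$ kills their reductions on $A$ (since $O_A$ acts only on $A$).

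First I would extract an ``information on $A$'' form of the hypothesis. From $O_A P = 0$ I get $O_A P O_A^\dagger = 0$. Taking the partial trace over $A^c$ and using that $O_A$ acts only on $A$ yields
\[
O_A \, \rho_A \, O_A^\dagger = 0,
\]
where $\rho_A = \trace_{A^c}(P)$. Since $\rho_A \ge 0$, this is equivalent to $O_A \rho_A^{1/2} = 0$, i.e.\ the range of $\rho_A$ is contained in $\ker O_A$. Equivalently, $\ker \rho_A \supseteq (\ker O_A)^\perp$ in $\calH_A$.

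Next I would invoke TQO-2 directly. Since $A\in \calS(r)$ with $r\le L^*$ and $B\in \calS(r+2)$ is the prescribed thickening, TQO-2 gives $\ker \rho_A = \ker \rho_A^{(B)}$, hence also $\mathrm{range}(\rho_A) = \mathrm{range}(\rho_A^{(B)})$. Combined with the previous step, this yields
\[
O_A \, \rho_A^{(B)} = 0, \qquad \text{and therefore} \qquad O_A \, \rho_A^{(B)} \, O_A^\dagger = 0.
\]

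Finally I would lift this back to the full projector. The operator $X := O_A P_B O_A^\dagger$ is positive semidefinite, and its partial trace over $A^c$ equals $O_A \rho_A^{(B)} O_A^\dagger$, which we just showed is zero. For a positive operator, vanishing of the partial trace forces $\trace(X)=0$ and hence $X=0$. Thus $O_A P_B O_A^\dagger = 0$, which gives $\|O_A P_B\|_\infty^2 = \|O_A P_B O_A^\dagger\|_\infty = 0$ (using $P_B = P_B^2$), and therefore $O_A P_B = 0$. No step here is really an obstacle; the one point to be careful about is that $O_A$ acts only on $A$ so it commutes with the partial trace over $A^c$, which is exactly what lets us translate the global annihilation statement into the reduced-density-matrix language where TQO-2 can be applied.
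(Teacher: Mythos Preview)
Your proof is correct and follows essentially the same route as the paper: reduce $O_A P=0$ to $O_A\rho_A O_A^\dagger=0$, use TQO-2 to identify the ranges of $\rho_A$ and $\rho_A^{(B)}$, and then lift back to $O_A P_B=0$ via positivity. The only difference is cosmetic---the paper passes through $\trace(O_A P_B O_A^\dagger)=0$ directly, whereas you take the partial trace first and then the full trace, but the logic is identical.
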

\begin{proof}
Let $\rho_A = \trace_{A^c} (P)$. The assumption $O_A P O_A^\dag=0$
implies that $O_A \rho_A O_A^\dag=0$, that is, $O_A$ annihilates any state in the range of $\rho_A$.
From TQO-2, the  range of $\trace_{A^c}(P_B)$ coincides with the range of $\rho_A$, and thus $\trace(O_A P_B O_A^\dag)=0$.
It implies $O_A P_B=0$.
\end{proof}

\subsection{Verification of TQO conditions  for stabilizer Hamiltonians}
\label{subs:stabilizer}
Conditions TQO-1,2 can be easily checked for
those models of TQO that can be described using the stabilizer formalism
such as the toric code model~\cite{tc} or topological color codes~\cite{Bombin06}.
For such models each site of the lattice $\Lambda$ represents one or several qubits, while
 the ground state subspace $P$
is a stabilizer code, i.e., the invariant subspace of some abelian stabilizer group $\calG\subseteq \mathrm{Pauli}(\Lambda)$.
Here $\mathrm{Pauli}(\Lambda)$ is a group generated by single-qubit Pauli operators $\sigma^x_i,\sigma^y_i,\sigma^z_i$.
The stabilizer group must have a set of geometrically local generators, that is, $\calG=\la S_1,\ldots,S_M\ra$
where any generator $S_a \in \mathrm{Pauli}(\Lambda)$ acts non-trivially only on $O(1)$ qubits located within distance $O(1)$
from each other. Note that the generators need not to be independent.
We choose the  corresponding stabilizer Hamiltonian $H_0$ as
\[
H_0=\sum_a (I-S_a)/2
\]
such that states invariant under action of stabilizers have zero energy.
The {\em minimal distance} of the code is the smallest integer $d$ such that there exists
a Pauli operator $O$ that commutes with all elements of $\calG$ but does not belong to $\calG$.
Such an operator $O$ can be regarded as a logical Pauli operator acting on encoded states.
It follows from results of~\cite{BPT09} that condition TQO-1 holds if we choose $L^*=\Omega(d)$.

Assume that the set of qubits is coarse-grained  into sites of the lattice
 $\Lambda$ such that the support of any generator $S_a$
is contained in at least one $2\times 2$ square.
One can bring this Hamiltonian into the form Eq.~(\ref{H_0}) by distributing the generators over $2\times 2$
squares in an arbitrary way.  For any  square $B\in \calS(r)$ one can define two subgroups of $\calG$:
(i) a subgroup $\calG_B$
generated by generators $S_a$ whose support is contained in $B$, and
(ii) a subgroup $\calG(B)$ that
 includes all stabilizers
$S\in \calG$ whose support is contained in $B$.
By definition, $\calG_B \subseteq \calG(B)$, but in general $\calG_B \ne \calG(B)$.
\begin{lemma}
The stabilizer Hamiltonian $H_0$ satisfies condition TQO-2 iff
for any square $A\in \calS(r)$, $r\le L^*$, one has
$\calG(A) \subseteq \calG_B$, where $B=b_1(A)$.
\end{lemma}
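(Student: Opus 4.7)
The plan is to exploit the explicit formula for reduced states of stabilizer code projectors, which reduces both $\rho_A$ and $\rho_A^{(B)}$ to (scalar multiples of) stabilizer projectors on $A$, and then compare the resulting groups.

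First I would recall the standard identity: for any abelian subgroup $\calH\subseteq \mathrm{Pauli}(\Lambda)$ not containing $-I$, the projector onto the joint $+1$ eigenspace is
\[
P_\calH = \frac{1}{|\calH|}\sum_{S\in \calH} S,
\]
and for any region $X\subseteq \Lambda$, the partial trace kills every Pauli with nontrivial support in $X^c$, giving
\[
\trace_{X^c}(P_\calH)=\frac{\dim \calH_{X^c}}{|\calH|}\sum_{\substack{S\in \calH\\ \mathrm{supp}(S)\subseteq X}} S.
\]
Applied to $\calH=\calG$ with $X=A$, and to $\calH=\calG_B$ with $X=A$, this yields
\[
\rho_A = c\sum_{S\in \calG(A)} S_A, \qquad \rho_A^{(B)}=c' \sum_{S\in \calG_B(A)} S_A,
\]
where $\calG_B(A):=\{S\in \calG_B:\mathrm{supp}(S)\subseteq A\}=\calG_B\cap \calG(A)$ and $c,c'>0$.

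Second, I would observe that each of these sums is (up to the positive constant $|\calG(A)|^{-1}$ or $|\calG_B(A)|^{-1}$) the projector onto the joint $+1$ eigenspace of the corresponding abelian Pauli subgroup on $\calH_A$. Hence $\rho_A$ and $\rho_A^{(B)}$ are proportional to projectors, call them $\Pi$ and $\Pi'$, and
\[
\ker \rho_A=\ker \Pi,\qquad \ker\rho_A^{(B)}=\ker\Pi'.
\]
Therefore $\ker \rho_A=\ker \rho_A^{(B)}$ is equivalent to $\Pi=\Pi'$, i.e.\ to the equality of the two stabilizer codes. Since distinct abelian Pauli subgroups (without $-I$) define distinct codes, this is equivalent to $\calG(A)=\calG_B(A)$. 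The containment $\calG_B(A)\subseteq \calG(A)$ is automatic, so this equality is equivalent to $\calG(A)\subseteq \calG_B(A)$, which is the same as $\calG(A)\subseteq \calG_B$ (any element of $\calG(A)\cap \calG_B$ already has support in $A$, so the intersection captures all of $\calG(A)$ precisely when $\calG(A)\subseteq \calG_B$).

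Finally, quantifying over all squares $A$ of size $r\le L^*$ turns the pointwise equivalence into the lemma. The only place one needs to be careful is the bookkeeping: checking that every Pauli operator appearing in the partial trace is either killed or survives with a nontrivial coefficient, and that $\calG_B\cap \calG(A)$ really is the subgroup of elements of $\calG_B$ supported on $A$ (which uses $A\subseteq B$). There is no genuine obstacle; the content of the lemma is really just the dictionary between kernels of reduced states of stabilizer codes and the subgroup of stabilizers supported in a region.
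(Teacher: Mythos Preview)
Your proof is correct and follows essentially the same approach as the paper: both compute $\rho_A$ and $\rho_A^{(B)}$ as scalar multiples of the stabilizer-code projectors for $\calG(A)$ and $\calG_B(A)$ respectively, and conclude that equality of kernels is equivalent to $\calG(A)=\calG_B(A)$, hence to $\calG(A)\subseteq \calG_B$. Your version is simply more explicit about the partial-trace identity and the dictionary between stabilizer subgroups and their codespaces, whereas the paper states these facts without derivation.
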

Thus  TQO-2 demands that  any
element  of the stabilizer group whose support is contained in a square  $A$
 can be written as a product of generators whose support is contained in $A$ and a small
 neighborhood of $A$.
 We leave verification of this condition for the toric code model as an exercise for the reader.

\begin{proof}
Indeed, the reduced density matrix $\rho_A$ computed using the global
ground subspace $P$ is proportional to the projector
onto the codespace of the stabilizer code $\calG(A)$.
The reduced density matrix $\rho_A$ computed using
the local ground subspace $P_B$ is proportional
onto the codespace of the stabilizer code $\calG_B(A)$,
where $\calG_B(A)$ includes all elements of $\calG_B$
whose support is contained in $A$.
Thus TQO-2 holds iff $\calG(A) = \calG_B(A)$.
This is equivalent to the condition of the lemma.
\end{proof}

\subsection{Unstable version of the toric code model}
\label{subs:unstable}
In this section we demonstrate that condition TQO-1 alone is not sufficient
for stability. Let us start from the standard toric code model~\cite{tc},
\[
H_{tc} = -\sum_p B_p - \sum_s A_s,
\]
where qubits live on edges of a square 2D  lattice, $p$ and $s$ labels plaquettes and
sites of the lattice, $B_p$ is a product of $\sigma^z$ over the four boundary edges of $p$,
and $A_s$ is a product of four $\sigma^x$ over the four edges incident to $s$.
We shall refer to $B_p$ and $A_s$ as plaquette and star operators.
The ground subspace $P$ is defined by eigenvalue equations $B_p=1$
for all $p$ and $A_s=1$ for all $s$. It is well known that
$P$ is a quantum code with the minimal distance $d=L-1$. Hence $P$ obeys TQO-1,2 with
$L^*=L-1$.

Consider now  a modified toric code model
\[
H_{tc}'=-\sum_{(p,q)} B_p B_q - \sum_{s} A_s  - B_{p^*}
\]
where $(p,q)$ labels pairs of adjacent plaquettes and
$p^*$ is some selected plaquette. We assume that the total
number of plaquettes $N_p$ is even.
Note that $H_{tc}'$ is a frustration-free commuting Hamiltonian.
In addition, $H_{tc}'$ and $H_{tc}$ have the same ground subspace $P$
corresponding to $B_p=1$, $A_s=1$ for all $s$ and $p$.
Hence $H_{tc}'$ obeys TQO-1. We claim that $H_{tc}'$ violates TQO-2.
Indeed, choose any square $A$ located sufficiently far from the selected plaquette $p^*$.
Then the local ground subspace $P_A$ has equal contributions from sectors
$B_p=1$, $A_s=1$ and $B_p=-1$, $A_s=1$ thus being inconsistent with
the global ground state.

Let us now argue that the spectral gap of $H_{tc}'$ closes in a presence of local perturbations
with a strength of order $1/N_p$. This instability has the same origin as the instability of the classical
2D Ising model under external magnetic field.
Note that $H_{tc}'$ has spectral gap $\Delta=2$ and the second smallest eigenvalue belongs to the sector
$A_s=1$, $B_p=-1$ for all $s$ and $p$.
Consider a perturbation describing an ``external magnetic field",
\[
V=h\sum_p B_p, \quad h>0.
\]
For sufficiently large $h$, say, $h=4/N_p$, the ground state of $H_{tc}'+V$  moves from the sector $A_s=1,B_p=1$ to the sector
$A_s=1$, $B_p=-1$. Hence the gap above the ground state closes for some intermediate value of $h$.

Needless to say, condition TQO-2 alone is also not sufficient for stability.
The simplest counter-example is the 2D classical Ising model in which the gap is unstable under external magnetic field.

\section{Relatively bounded perturbations}
\label{sec:rel}

\subsection{Definition and basic properties}
\label{subs:rel}
In this section we introduce necessary facts from the theory of relatively bounded perturbations.
It mostly follows Chapter~IV of~\cite{Kato} although our definitions and proofs are much simpler since
we are interested  only in finite-dimensional Hilbert spaces.

Let $H_0$ and $W$ be any Hamiltonians acting on some Hilbert space $\calH$. We shall say that $W$ is {\em relatively bounded} by $H_0$
iff there exist $0\le b<1$  such that
\be
\label{rel_bound}
\| W \psi \| \le b\,  \| H_0 \psi\| \quad \mbox{for all $|\psi\ra \in \calH$}.
\ee
The notion  of a relatively bounded perturbation allows one to define a ``weak perturbation" and
rigorously justify application of perturbative expansions
even when the norm of $W$ is much larger than the spectral gap of $H_0$.
We shall be mostly interested in the case when
$b$ is a constant independent of the lattice size $L$.
Note that the condition Eq.~(\ref{rel_bound}) is
equivalent to an operator inequality $W^2 \le b^2 H_0^2$.

The following lemma asserts that a relatively bounded perturbation can change
eigenvalues of $H_0$ at most by a factor $1\pm b$.
\begin{lemma}
\label{lemma:relb}
Suppose $W$ is relatively bounded by $H_0$.
Then the spectrum of $H_0+W$ is contained in the union
of intervals $[\lambda_0(1-b),\lambda_0(1+b)]$ where $\lambda_0$
runs over the spectrum of $H_0$.
\end{lemma}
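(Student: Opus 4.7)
The plan is to reason about an arbitrary eigenvalue $\lambda\in\spect(H_0+W)$ and show that $\lambda$ must lie in at least one of the claimed intervals. Since we are in a finite-dimensional Hilbert space, any $\lambda\in\spect(H_0+W)$ admits a nonzero eigenvector $|\psi\ra$ with $(H_0+W)|\psi\ra=\lambda|\psi\ra$, which rearranges to $(\lambda-H_0)|\psi\ra = W|\psi\ra$. Taking norms and applying the hypothesis immediately gives
\[
\|(\lambda - H_0)\psi\| = \|W\psi\| \le b\,\|H_0\psi\|.
\]

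Next I would diagonalize $H_0$ and expand $|\psi\ra = \sum_k c_k |\phi_k\ra$ in an orthonormal eigenbasis, where $H_0|\phi_k\ra=\lambda_k|\phi_k\ra$ and $\lambda_k\in\spect(H_0)$. Since both $(\lambda-H_0)$ and $H_0$ are diagonal in this basis, the previous inequality squares to
\[
\sum_k |c_k|^2 (\lambda-\lambda_k)^2 \;\le\; b^2 \sum_k |c_k|^2 \lambda_k^2,
\]
or equivalently $\sum_k |c_k|^2 \bigl[(\lambda-\lambda_k)^2 - b^2\lambda_k^2\bigr]\le 0$.

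Now I argue by contradiction: suppose $\lambda$ lies outside every interval $[\lambda_k(1-b),\lambda_k(1+b)]$, which means $|\lambda-\lambda_k|>b|\lambda_k|$ for each $k$ appearing in $\spect(H_0)$. Then every bracket $(\lambda-\lambda_k)^2 - b^2\lambda_k^2$ is strictly positive. Because $|\psi\ra\ne 0$ at least one $c_k$ is nonzero, so the weighted sum is strictly positive, contradicting the bound above. Hence $\lambda$ must belong to at least one interval $[\lambda_k(1-b),\lambda_k(1+b)]$, proving the claim.

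There is no serious obstacle here; the only subtlety is the use of the squared form of the relative bound, which is legitimate because the condition $\|W\psi\|\le b\|H_0\psi\|$ is equivalent to the operator inequality $W^2 \le b^2 H_0^2$ already noted in the text, and this is exactly what allows the diagonal-basis bookkeeping to go through term by term. The finite-dimensionality removes any worry about approximate rather than exact eigenvectors, so the argument is essentially a one-line spectral calculation once the eigenvector identity $(\lambda-H_0)\psi = W\psi$ is in hand.
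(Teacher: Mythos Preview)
Your proof is correct and essentially identical to the paper's own argument: both start from the eigenvector identity $(\lambda-H_0)\psi = W\psi$, apply the relative bound to get $\sum_k |c_k|^2(\lambda-\lambda_k)^2 \le b^2\sum_k |c_k|^2\lambda_k^2$, and conclude that some $\lambda_k$ satisfies $|\lambda-\lambda_k|\le b|\lambda_k|$. The only cosmetic difference is that the paper phrases the last step as ``there exists $\lambda_0$ with $(\lambda_0-\lambda)^2\le b^2\lambda_0^2$'' while you phrase it as a contradiction; these are contrapositives of each other.
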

\begin{proof}
Indeed, suppose $(H_0+W)\, |\psi\ra = \lambda\, |\psi\ra$, that is,
\be
(H_0-\lambda \, I) \, |\psi\ra = -W\, |\psi\ra.
\ee
The relative boundness then implies $\| (H_0-\lambda \, I) \psi\| \le b \| H_0 \psi\|$, that is,
\be
\label{aux00}
\la \psi| (H_0-\lambda \, I)^2 |\psi\ra  \le b^2 \la \psi| H_0^2  |\psi\ra.
\ee
Let $H_0=\sum_{\lambda_0} \lambda_0 P_{\lambda_0}$ be the spectral decomposition of $H_0$.
Here the sum runs over the spectrum  of $H_0$ and $P_{\lambda_0}$ is a projector
onto the eigenspace with an  eigenvalue $\lambda_0$.
Define a probability distribution $p(\lambda_0)=\la \psi|P_{\lambda_0} |\psi\ra$.
Substituting it into Eq.~(\ref{aux00}) one gets
\be
\sum_{\lambda_0} (\lambda_0-\lambda)^2  \, p(\lambda_0) \le \sum_{\lambda_0} b^2 \lambda_0^2 \, p(\lambda_0).
\ee
Therefore there exists at least one eigenvalue $\lambda_0$ such that
\be
(\lambda_0-\lambda)^2 \le b^2 \lambda_0^2.
\ee
This is equivalent to $\lambda_0(1-b)\le \lambda\le \lambda_0(1+b)$.
\end{proof}

\subsection{Stability of TQO under block-diagonal perturbations}
\label{subs:bd_stability}

In this section we shall consider perturbations
\[
W=\sum_{A\in \calS(q)} W_A
\]
such that all local terms $W_A$ are block-diagonal,
\[
[W_A,P]=0 \quad \mbox{for all $A\in \calS(q)$}.
\]
We shall assume that $q\le L^*$, so
condition TQO-1 implies that  the restriction
of $W_A$ onto the $P$-subspace is a multiple of the identity.
Since we are not interested in the overall shift in energy, we can assume that
\be
\label{bd2}
W_A\, P =0 \quad \mbox{for all $A\in \calS(q)$}.
\ee
The interaction strength of $W$ will be measured by a parameter
\be
\label{w}
w=\max_{A\in \calS(q)} \|W_A\|.
\ee

\begin{lemma}
\label{lemma:bd_stability}
Let $W$ be a perturbation satisfying Eqs.~(\ref{bd2},\ref{w}).
Then $W$ is relatively bounded by  $H_0$ with a constant
\[
b=O(w q^2).
\]
\end{lemma}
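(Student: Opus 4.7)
The argument is driven by pinning down the local algebraic structure of each $W_A$. The block--diagonality hypothesis $[W_A, P] = 0$ together with $W_A P = 0$ gives $P W_A = 0$, and taking the adjoint also $W_A^\dagger P = 0$. Corollary~\ref{corol:tqo3} applied to both $W_A$ and $W_A^\dagger$ then yields $W_A = W_A Q_{B(A)}$ and $W_A = Q_{B(A)} W_A$ simultaneously, i.e.
\[
W_A \;=\; Q_{B(A)}\, W_A\, Q_{B(A)}, \qquad Q_{B(A)} := I - P_{B(A)},
\]
where $B(A)$ is the $(q+2)\times (q+2)$ square obtained from $A$ by adjoining one layer of nearest neighbors. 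Every $W_A$ thus both requires and produces a local violation inside $B(A)$.

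Next I would partition $\calS(q)$ into $m = (q+2)^2 = O(q^2)$ color classes $G_1,\dots,G_m$ according to the residues of the upper--left corner of $A$ modulo $q+2$. Inside any class the extended neighborhoods $\{B(A)\}_{A\in G_i}$ are pairwise disjoint, so the projectors $\{Q_{B(A)}\}_{A\in G_i}$ commute and $N := \sum_{A \in G_i} Q_{B(A)} \le \sum_{C \in \calS(2)} Q_C \le H_0$. The joint spectral decomposition of this commuting family produces orthogonal projectors
\[
\Pi_T \;:=\; \prod_{A\in T} Q_{B(A)} \prod_{A\in G_i\setminus T} P_{B(A)},
\qquad T \subseteq G_i,
\]
which sum to $I$ and satisfy $N = |T|$ on the range of $\Pi_T$. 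Writing $W^{(i)} := \sum_{A\in G_i} W_A$, a direct check using the identity displayed above plus the disjointness of supports shows that each $W_A$ is block--diagonal in $\bigoplus_T \mathrm{Range}(\Pi_T)$: for $A \notin T$ it annihilates $\Pi_T$ (since $W_A P_{B(A)} = 0$), and for $A \in T$ it commutes with $\Pi_T$ (since $W_A$ commutes with all $Q_{B(A')}, P_{B(A')}$ for $A' \neq A$ in $G_i$, and $W_A Q_{B(A)} = Q_{B(A)} W_A = W_A$).

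Hence $W^{(i)}$ preserves each $\mathrm{Range}(\Pi_T)$, with $W^{(i)} \Pi_T = \sum_{A \in T} W_A \Pi_T$ consisting of at most $|T|$ nonzero terms. The triangle inequality gives $\|W^{(i)} \Pi_T \psi\| \le |T|\, w\, \|\Pi_T \psi\|$, and summing over the orthogonal $\Pi_T$'s yields
\[
\|W^{(i)} \psi\|^2 \;\le\; w^2 \sum_T |T|^2\, \|\Pi_T \psi\|^2 \;=\; w^2\, \|N \psi\|^2 \;\le\; w^2\, \|H_0 \psi\|^2,
\]
the last step using $0 \le N \le H_0$ and $[N, H_0] = 0$ (both operators are functions of the commuting family $\{Q_C\}$), which together imply $N^2 \le H_0^2$. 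A final triangle inequality over the $m$ color classes produces $\|W \psi\| \le \sum_i \|W^{(i)} \psi\| \le m\, w\,\|H_0 \psi\| = O(q^2 w)\,\|H_0 \psi\|$, which is the claimed bound. The only non--routine step is verifying the block--diagonality of $W_A$ with respect to $\{\Pi_T\}$; once that structural fact—the ``coarse--graining of syndrome subspaces'' alluded to earlier—is pinned down, no Cauchy--Schwarz or additional combinatorics are required.
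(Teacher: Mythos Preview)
Your argument is correct and constitutes a genuinely different route from the paper's. The paper coarse-grains the \emph{lattice} into $q\times q$ boxes $B_1,\dots,B_M$, decomposes $\calH$ according to which boxes are occupied by defects (projectors $R_\calY$, $\calY\subseteq[M]$), and then bounds $\la\psi|W^2|\psi\ra$ by a triple sum $\sum_{\calY,\calZ,\calV}\|R_\calY W R_\calZ\|\cdot\|R_\calZ W R_\calV\|\cdot\la\psi|R_\calY|\psi\ra$, using Corollary~\ref{corol:tqo3} only one-sidedly to control how many off-diagonal blocks $R_\calY W_A R_\calZ$ survive. You instead coarse-grain the \emph{set of interactions} $\calS(q)$ into $O(q^2)$ colour classes, and the key observation that distinguishes your proof is the \emph{two-sided} use of Corollary~\ref{corol:tqo3} (applied to both $W_A$ and $W_A^\dagger$) to get $W_A=Q_{B(A)}W_AQ_{B(A)}$; this upgrades ``few off-diagonal blocks'' to ``exactly block-diagonal'' within each colour class, so that the triple-sum combinatorics and Cauchy--Schwarz step disappear entirely and the $O(q^2)$ factor emerges transparently as the number of colours. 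Your proof is shorter and more structural; the paper's box decomposition is perhaps more robust to variants where exact block-diagonality fails. One minor point: on the torus $\ZZ_L\times\ZZ_L$ the residue-mod-$(q+2)$ colouring can fail to separate the $B(A)$'s near the wrap-around when $L$ is not a multiple of $q+2$; this is easily repaired (e.g.\ by doubling the modulus) and still gives $O(q^2)$ colours, so the conclusion is unaffected.
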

\begin{proof}
Let us start from introducing some notations.
A {\em syndrome} $s\, : \, \calS(2) \to \{0,1\}$  is a function that assigns an eigenvalue $s_A\in \{0,1\}$ to every projector $Q_A$, $A\in \calS(2)$.
Given a syndrome $s$ and a square $A\in \calS(2)$ we shall say that $A$ is a {\em defect}
iff $s_A=1$. Thus one can consider $s$ as a configuration of defects.
For any syndrome $s$  define a projector
\[
R_s = \prod_{A\in \calS(2)} \left[ s_A Q_A + (1-s_A) (I- Q_A)\right]
\]
projecting onto a subspace spanned by states with a syndrome $s$.
Clearly the family of projectors $R_s$ defines an orthogonal decomposition
of the Hilbert space, that is, $\sum_s R_s=I$.

Let us fix some
partition of the lattice into contiguous $q\times q$ squares $B_1,\ldots,B_M \in \calS(q)$
(if $L$ is not a multiple of $q$, the squares $B_i$ may have size  $q\pm O(1)$).
We shall refer to a set of $2\times 2$ squares
contained in a particular square $B_j$ as a {\em box}. We shall need the following properties:
\begin{enumerate}
\item  Every square $A\in \calS(2)$ is contained in  exactly one box $B_i$,
\item Each box $B_i$ overlaps with $O(q^2)$ squares $C\in \calS(q)$,
\item Each square $C\in \calS(q)$ overlaps with $O(1)$ boxes $B_i$.
\end{enumerate}
Given a syndrome $s$ and a box $B_i$ we shall say that $B_i$ is {\em occupied}
if $B_i$ contains at least one defect, that is, there is a $2\times 2$ square $A\subset B$ such that $s_A=1$.
Otherwise we shall say that the box $B_i$ is {\em empty}.

Given a syndrome $s$ let $b(s)\subseteq [M]$ be the subset of occupied boxes.
For any subset of boxes $\calY \subseteq [M]$ define a projector
\[
R_\calY=\sum_{s \, : \, b(s)=\calY} R_s.
\]
It projects onto the subspace in which all boxes in $\calY$ are occupied and the remaining
boxes are empty. Clearly, the family of projectors $R_\calY$ defines
an orthogonal decomposition, that is, $\sum_{\calY \subseteq [M]} R_\calY=I$.
We claim that
any operator $W_A$ acting on a square $A\in \calS(q)$ and  satisfying Eq.~(\ref{bd2}) has only a few
off-diagonal blocks with respect to this decomposition. Specifically, Corollary~\ref{corol:tqo3}
 implies that
\be
\label{trans}
R_\calY W_A R_\calZ\ne 0
\ee
only if $A$ has distance $O(1)$ from some occupied box in $\calY$ {\em and}
$A$ has distance $O(1)$ from some occupied box in $\calZ$, {\em and}
the configurations $\calY,\calZ$ differ only at those boxes that overlap with $A$.
Clearly, for any fixed $\calY\subseteq [M]$ such that $\calY$ has $k$ occupied boxes
the number of pairs $(A\in \calS(q), \calZ\subseteq [M])$ that could satisfy Eq.~(\ref{trans})
is at most $O(kq^2)$.
Thus for any state $|\psi\ra$ we get
\bea
\la \psi|W^2|\psi\ra &= &  \sum_{\calY,\calZ,\calV\subseteq [M]} \la \psi|R_\calY W R_\calZ W R_\calV |\psi\ra   \nn \\
&\le &
\sum_{\calY,\calZ,\calV\subseteq [M]} \| R_\calY W R_\calZ \| \cdot \| R_\calZ W R_\calV \| \cdot
\| R_\calY \psi \| \cdot \| R_\calV \psi\| \nn \\
&\le &
\sum_{\calY,\calZ,\calV\subseteq [M]} \| R_\calY W R_\calZ \| \cdot \| R_\calZ W R_\calV \| \cdot \frac12 (\la \psi|R_\calY |\psi\ra+
\la \psi|R_\calV|\psi\ra )\nn \\
&= & \sum_{\calY,\calZ,\calV\subseteq [M]}  \| R_\calY W R_\calZ \| \cdot \| R_\calZ W R_\calV \| \cdot \la \psi|R_\calY |\psi\ra \nn \\
& \le & \sum_{k\ge 0}\; \;  \sum_{\calY\, : \, |\calY|=k} \; \; O(k^2 q^4 w^2) \la \psi|R_\calY |\psi\ra = O(w^2 q^4 ) \la \psi|G|\psi\ra,
\label{W^2}
\eea
where
\be
G=\sum_{k\ge 0} \; \; \sum_{\calY\, : \, |\calY|=k} \; \; k^2 R_\calY.
\ee
The  inequality Eq.~(\ref{W^2}) follows from the fact that $\calY$ and $\calZ$ differ at at most $O(1)$ boxes
and an obvious bound $k(k+O(1))=O(k^2)$. Finally, note that $G\le H_0^2$ since
any configuration of defects with $k$ occupied boxes must have at least $k$ defects
and since creating a defect costs at least a unit of energy.
We arrive at
\be
\la \psi|W^2|\psi\ra \le b^2 \la \psi|H_0^2 |\psi\ra, \quad b=O(wq^2).
\ee
It completes the proof.
\end{proof}

We shall also need a local version of Lemma~\ref{lemma:bd_stability}.
For any region $C\subseteq \Lambda$ define  a local version of the Hamiltonian $H_0$,
\be
H_0(C)=\sum_{\substack{A\in \calS(2)\\ A \subseteq C}} G_A.
\ee
Let $P_C$ and $Q_C$ be the local versions of the projectors $P$ and $Q$ defined in Eq.~(\ref{local_projectors}).
The following is a straightforward corollary of Lemma~\ref{lemma:bd_stability}.
\begin{corol}
\label{corol:bd_stability}
Let $C\subseteq \Lambda$ be any square of size smaller than $L^*$.
Let $W=\sum_{A\in \calS(q)} W_A$ be a perturbation satisfying Eqs.~(\ref{bd2},\ref{w}).
Suppose also that $W_A=0$ unless $C$ contains both $A$ and the nearest neighbors of $A$.
Then $W$ is relatively bounded by  $H_0(C)$ with a constant
$b=O(wq^2)$.
\end{corol}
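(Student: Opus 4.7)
The plan is to transpose the proof of Lemma~\ref{lemma:bd_stability} almost verbatim, with the global lattice $\Lambda$ replaced by the square $C$. The support hypothesis on $W_A$ is precisely what enables the localization: whenever $W_A\ne 0$ we have $A\in \calS(q)$ with $q\le L^*$, and the square $B$ obtained by enlarging $A$ to include its nearest neighbors satisfies $B\subseteq C$. Corollary~\ref{corol:tqo3} then upgrades the given identity $W_A P=0$ to $W_A P_B=0$, and since $P_C\le P_B$ we obtain $W_A P_C=0$ as well. This is the crucial input that lets us discard, in the $C$-local syndrome decomposition, any $R_\calY$-block supported away from the near-neighborhood of $A$.

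First I would introduce the $C$-local syndrome decomposition: syndromes $s$ defined on $\{A\in\calS(2):A\subseteq C\}$, the associated projectors $R_s=\prod_{A\subseteq C}[s_A Q_A+(1-s_A)(I-Q_A)]$, and, after partitioning $C$ into contiguous $q\times q$ boxes $B_1,\ldots,B_M$, the coarse projectors $R_\calY=\sum_{b(s)=\calY} R_s$ for $\calY\subseteq [M]$. These give an orthogonal resolution of the identity on $\calH$. Next I would verify that the ``few off-diagonal blocks'' property survives locally: $R_\calY W_A R_\calZ\ne 0$ forces $\calY$ and $\calZ$ to agree outside the $O(1)$ boxes meeting $A$ (because $W_A$ commutes with every $Q_{A'}$ disjoint from $A$), and further forces each of $\calY,\calZ$ to contain an occupied box within distance $O(1)$ of $A$ (because otherwise $R_\calY\le P_B$ for $B$ the nearest-neighbor expansion of $A$, whence $W_A R_\calY=W_A P_B R_\calY=0$). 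The counting that, for fixed $\calY$ with $|\calY|=k$, only $O(kq^2)$ pairs $(A,\calZ)$ contribute is then identical to the global case.

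Finally, I would run the same Cauchy-Schwarz-style chain as in~(\ref{W^2}) to obtain
\[
\la\psi|W^2|\psi\ra \;\le\; O(w^2 q^4)\,\la\psi|G|\psi\ra, \qquad G=\sum_{\calY\subseteq [M]} |\calY|^2\, R_\calY,
\]
and then check that $G\le H_0(C)^2$. This last step rests on $H_0(C)$ commuting with each $R_\calY$, on $G_A\ge Q_A$ (which follows from $G_A^2\ge G_A$), and on the observation that the number of defects in $C$ on the range of $R_\calY$ is at least $|\calY|$, since each occupied box houses at least one defect. The argument is essentially mechanical; the only delicate checkpoint is the TQO step, namely that $W_A P_B=0$ must survive the passage from the global to the local ground subspace. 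This is settled at the outset by invoking Corollary~\ref{corol:tqo3}, whose hypothesis $q\le L^*$ and whose conclusion ``lives on $B\subseteq C$'' are both guaranteed by the support assumption on $W_A$.
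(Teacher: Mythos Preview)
Your proposal is correct and follows the same approach as the paper's proof, which is extremely terse: it simply notes that Eq.~(\ref{bd2}) together with TQO-2 yields $W_A P_C=0$, and then says all steps of Lemma~\ref{lemma:bd_stability} go through with $C$ playing the role of $\Lambda$. You have carefully unpacked exactly those steps, and your explicit invocation of Corollary~\ref{corol:tqo3} to obtain $W_A P_B=0$ for the nearest-neighbor enlargement $B\subseteq C$ is precisely the mechanism the paper relies on (implicitly) when it reruns the ``few off-diagonal blocks'' argument inside $C$.
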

\begin{proof}
Combining Eq.~(\ref{bd2}) and TQO-2 we conclude that $W_A P_C=0$ for all $A$.
Thus we can apply all steps in the proof of Lemma~\ref{lemma:bd_stability} to the square $C$
considered  as the entire lattice $\Lambda$.
\end{proof}

We shall need another technical lemma that provides a bound on the norm of a commutator involving
a block-diagonal Hamiltonian. Let us start from the simplest scenario.
Let $W$ be any operator such that $W$ is relatively bounded by $H_0$
with a constant  $0\le b<1$. Then for any operator $S$ we have
\bea
\| Q [S,W]P\| &=& \| Q W S P\| = \| Q W H_0^{-1} Q H_0 S P \|  \nn \\
&=&  \| Q W H_0^{-1} Q [H_0,S]P\| \le
 \| W H_0^{-1} Q\| \cdot \| Q [H_0,S]P\|.\nn
\eea
Here the first equality follows from $WP=0$
and the third equality uses $H_0P=0$.
Let $|\psi\ra\in Q$ be a normalized state such that $\| W H_0^{-1} Q\| =\| W H_0^{-1} \psi\|$.
Using the relative boundness assumption we get
\[
\| W H_0^{-1} Q\| =\| W H_0^{-1} \psi\| \le b\| H_0 H_0^{-1} \psi\|  = b.
\]
To conclude, we have proved that
\[
\| Q [S,W]P\| \le b\,   \| Q [S,H_0]P\|.
\]
Applying the same arguments as above with $P$ and $Q$ replaced by their local versions
$P_C$ and $Q_C$, see Eq.~(\ref{local_projectors}), we arrive at the following lemma.

\begin{lemma}
\label{lemma:commutator}
Let  $C\subseteq \Lambda$ be any region.
Let $W$ be any Hamiltonian such that $W$ is relatively bounded by
$H_0(C)$ with a constant $0\le b<1$.
Then for any operator $S$ one has
\be
\label{com2}
\| Q_C [S,W] P_C\| \le b\, \| Q_C [S,H_0(C)] P_C\|.
\ee
\end{lemma}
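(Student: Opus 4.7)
The plan is to mimic, essentially line by line, the global derivation that immediately precedes the lemma, with the substitutions $P \to P_C$, $Q \to Q_C$, $H_0 \to H_0(C)$ throughout, and to verify that each algebraic identity used in the global case survives the localization.

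The first thing I would note is that the local analogue of $WP=0$ now comes for free from the hypothesis: since $H_0(C) P_C = 0$ by construction of $P_C$, applying the relative bound to any state of the form $P_C|\chi\ra$ forces $WP_C=0$. From this, $Q_C[S,W]P_C = -Q_CWSP_C$ exactly as before. Next I would verify that $H_0(C)$ has a bona fide pseudo-inverse on the range of $Q_C$: the $G_A$ pairwise commute and satisfy $G_A^2 \ge G_A$, so on $Q_C$ at least one $G_A$ contributes an eigenvalue $\ge 1$ and the positive spectrum of $H_0(C)$ is bounded below by $1$. Then inserting $H_0(C)^{-1}H_0(C) = Q_C$, using $H_0(C) = H_0(C)Q_C$ together with $H_0(C)P_C=0$ to rewrite $H_0(C)SP_C$ as $[H_0(C),S]P_C$, one gets
\[
Q_C[S,W]P_C = -Q_C W H_0(C)^{-1} Q_C [H_0(C),S] P_C.
\]
Submultiplicativity of the operator norm reduces the claim to the estimate $\|Q_C W H_0(C)^{-1} Q_C\| \le b$, which follows by testing on a normalized $|\psi\ra \in Q_C$: the vector $|\phi\ra := H_0(C)^{-1}|\psi\ra$ lies in $Q_C$ and satisfies $\|W\phi\| \le b\|H_0(C)\phi\| = b\|\psi\| = b$. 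Since $\|[H_0(C),S]\| = \|[S,H_0(C)]\|$, the bound Eq.~(\ref{com2}) follows.

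I do not foresee any real obstacle, since the derivation only uses three ingredients that are preserved under localization: that $H_0(C)$ kills $P_C$, that $H_0(C)$ is invertible on $Q_C$, and the relative boundedness assumption itself. The one conceptual point worth flagging is that $H_0(C)$ involves only interactions inside $C$ whereas $S$ is an arbitrary operator on the full lattice, so the commutator $[H_0(C),S]$ on the right-hand side need not simplify any further; it is precisely the natural local object that the lemma is meant to control.
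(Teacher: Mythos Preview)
Your proposal is correct and follows essentially the same approach as the paper, which simply states that the lemma follows by repeating the preceding global argument with the substitutions $P\to P_C$, $Q\to Q_C$, $H_0\to H_0(C)$. Your added justifications (that $WP_C=0$ is forced by the relative bound together with $H_0(C)P_C=0$, and that $H_0(C)$ is invertible on $Q_C$ with lowest eigenvalue $\ge 1$) are exactly the checks one needs to make the localization go through, and they are sound.
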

Combining this lemma and  Corollary~\ref{corol:bd_stability}
we get a simple upper bound on $\| Q_C [S,W] P_C\|$.
\begin{corol}
\label{corol:commutator}
Let $C\subseteq \Lambda$ be any square of size smaller than $L^*$.
Let $W=\sum_{A\in \calS(q)} W_A$ be a perturbation satisfying Eqs.~(\ref{bd2},\ref{w}).
Suppose also that $W_A=0$ unless $C$ contains both $A$ and the nearest neighbors of $A$.
Then for any operator $S$ one has
\be
\label{com3}
\| Q_C [S,W] P_C\| \le O(wq^2) \, \| Q_C [S,H_0(C)] P_C\|.
\ee
\end{corol}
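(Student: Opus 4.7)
The plan is essentially to chain the two preceding results: Corollary~\ref{corol:bd_stability} supplies a relative bound on $W$ by $H_0(C)$, and Lemma~\ref{lemma:commutator} converts such a relative bound into a commutator inequality of exactly the desired form. So the proof should be a short two-step invocation rather than a new calculation.

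Concretely, I would first check that the hypotheses of Corollary~\ref{corol:bd_stability} are met verbatim: $C$ is a square of size smaller than $L^*$, $W=\sum_{A\in\calS(q)} W_A$ has interaction strength $w$ in the sense of Eq.~(\ref{w}), every $W_A$ is block-diagonal with $W_A P=0$ by Eq.~(\ref{bd2}), and the support condition ``$A$ together with its nearest neighbors lies in $C$'' is exactly the one assumed in our corollary. Corollary~\ref{corol:bd_stability} then yields $\|W\psi\|\le b\,\|H_0(C)\psi\|$ for all $|\psi\ra$, with $b = O(w q^2)$; implicitly this uses TQO-2 inside $C$ via Corollary~\ref{corol:tqo3} to upgrade $W_A P=0$ to $W_A P_C=0$, which is what allows the syndrome-coarse-graining argument to run on $C$ rather than on the global lattice.

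Second, I would feed this $W$ (now relatively bounded by $H_0(C)$ with constant $b$) into Lemma~\ref{lemma:commutator}, applied to the same region $C$ and the same operator $S$. The lemma immediately gives
\[
\|Q_C [S,W] P_C\| \;\le\; b\,\|Q_C [S,H_0(C)] P_C\| \;=\; O(w q^2)\,\|Q_C [S,H_0(C)] P_C\|,
\]
which is the claim. There is no real obstacle here, because the two preceding results were designed to compose: the only thing to double-check is that the constant ``$b$'' appearing in Lemma~\ref{lemma:commutator} is the same $b$ produced by Corollary~\ref{corol:bd_stability}, which it is by construction. Once $wq^2$ is small enough that $b<1$ the lemma applies directly; if $b\ge 1$ the inequality is anyway subsumed in the $O(\cdot)$ constant by bounding $\|Q_C[S,W]P_C\|$ using $\|W\|\le O(wq^2)$ times $L^2$-type terms from $H_0(C)$, but in the regime of interest (small $J$) we will always be in the $b<1$ case and no such patch is needed.
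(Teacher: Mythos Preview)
Your proposal is correct and matches the paper's approach exactly: the corollary is obtained by combining Corollary~\ref{corol:bd_stability} (which gives the relative bound $b=O(wq^2)$ on $W$ by $H_0(C)$) with Lemma~\ref{lemma:commutator} (which converts the relative bound into the commutator inequality). Your remark about the $b<1$ hypothesis is a fair technical caveat that the paper leaves implicit, since in the applications $wq^2$ is always taken small.
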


\section{Hamiltonian flow equations}
\label{sec:flow}

\subsection{Outline of the method}

We are interested in the low-energy spectrum of a perturbed Hamiltonian
$H_0+V$, where $V$ is a local perturbation specified by a list of local interactions,
\be
V=\sum_{r\ge 1} \sum_{A\in \calS(r)} V_{r,A}.
\ee
Here $V_{r,A}$ is some interaction supported on a square $A\in \calS(r)$.
Our strategy will be to reduce the case of a generic perturbation to the special
case of a block-diagonal perturbation which we can analyze using techniques of
Section~\ref{sec:rel}.  To this end we shall define a hierarchy of Hamiltonians unitarily equivalent to $H$,
\be
\label{H(n)}
H(n)=H_0 + \sum_{k=1}^n W(k) + V(n)+E(n) +\lambda(n) I, \quad n=0,1,2,\ldots,
\ee
such that $H(0)=H_0+V$ and $H(n+1)$ can be obtained from $H(n)$ by a unitary transformation,
\be
\label{step}
H(n+1)=U(n) H(n) U(n)^\dag, \quad U(n)U(n)^\dag=I.
\ee
Accordingly,  the spectrum of $H(n)$ is the same for all $n\ge 0$.
The purpose of the transformation $U(n)$ is to make the Hamiltonian
more close to the block-diagonal form.
We shall refer to $H(n)$
as a {\em level-$n$ Hamiltonian}.

Let us describe the purpose of various terms in Eq.~(\ref{H(n)}).
The Hamiltonian $W(k)$ represents a block-diagonal
contribution to the total Hamiltonian $H(n)$ that has been created at the level $k$.
The Hamiltonian $V(n)$ represents the part of the total Hamiltonian $H(n)$
that has to be block-diagonalized at the level $n+1$.
The Hamiltonians $V(n)$ and $W(n)$ will be represented by a sum of local interactions
supported in squares of size $r\le L^*$, and such that all interactions involved in $W(n)$
are individually block-diagonal,
\[
V(n)=\sum_{1\le r\le L^*} \sum_{A\in \calS(r)} V_{r,A}(n),
\]
and
\[
W(n)=\sum_{1\le r\le L^*} \sum_{A\in \calS(r)} W_{r,A}(n), \quad \mbox{where} \quad QW_{r,A}(n)P=0.
\]
All contributions from squares of size $r>L^*$ will be collected into the   third Hamiltonian $E(n)$
which can be regarded as an error Hamiltonian. The norm of $E(n)$ will be exponentially small in $L$
for all $n$.  The norms
of $W(n)$ and $V(n)$ will decay roughly as doubly-exponential functions of $n$.
Thus at level $n\sim \log{L}$ the total Hamiltonian $H(n)$ will be block-diagonal up to corrections of order $\exp{(-poly(L))}$
resulting from $V(n)$ and $E(n)$. Finally, $\lambda(n)$ is an overall energy shift that we shall often ignore.

We start at the level $n=0$ with initial conditions
\[
W_{r,A}(0)=0, \quad V_{r,A}(0)=V_{r,A} \quad \mbox{for $r\le L^*$}, \quad
E(0) = \sum_{r>L^*} \sum_{A\in \calS(r)} V_{r,A}.
\]
Accordingly, $H(0)=H_0+V$ is the Hamiltonian we are interested in.
Suppose we have already defined the Hamiltonians $W(0),\ldots,W(n)$, $V\equiv V(n)$, $E\equiv E(n)$
for some level $n$. Let
\[
W=\sum_{k=1}^n W(k)
\]
be the overall block-diagonal part  of $H(n)$.
Let
\[
H\equiv H(n) =H_0+W+V+E.
\]
We shall define the operator $U(n)$ in Eq.~(\ref{step}) as  $U(n)=\exp{(S)}$, where
$S$ is the solution of a linearized block-diagonalization problem
\be
\label{Sdef}
Q([S,H_0+W]+V)P=0, \quad S^\dag=-S.
\ee
The meaning of this equation can be easily understood if one treats $V$ as a perturbation and $H_0+W$ as
an unperturbed Hamiltonian. Expanding the transformed Hamiltonian $e^S He^{-S}$
in powers of $S$ we get
\[
e^S H e^{-S} = H_0 +W + ([S,H_0+W] + V) + O(S^2)+O(SV)+O(E).
\]
Thus Eq.~(\ref{Sdef}) says that the transformed Hamiltonian must be block-diagonal up to terms $O(V^2)$
and $O(E)$.
The solution of Eq.~(\ref{Sdef}) is constructed in Section~\ref{sec:lbd}, see Lemma~\ref{lemma:lbd}.
We start from defining raw versions of $W(n+1)$ and $V(n+1)$ which we shall
denote $\tilde{W}$ and $\tilde{V}$ respectively, namely
\be
\label{rawW}
\tilde{W}=[S,H_0+W]+V,
\ee
and
\be
\label{rawV}
\tilde{V}=e^S (H_0+W+V) e^{-S} - (H_0 + W +V   +[S,H_0+W]) .
\ee
A simple algebra shows that
\[
e^S H e^{-S}=
H_0+W+\tilde{W}+\tilde{V}+e^S E e^{-S}.
\]
Note also that $\tilde{W}$ is block-diagonal due to Eq.~(\ref{Sdef}).
We shall  construct a decomposition of $\tilde{W}$ into a sum
of local interactions that are individually block-diagonal using techniques of Section~\ref{sec:lbd}, see Corollary~\ref{corol:lbd}
of Lemma~\ref{lemma:lbd}. It will yield
\[
\tilde{W}=\sum_{r\ge 1} \sum_{A\in \calS(r)} \tilde{W}_{r,A}, \quad \mbox{where} \quad Q\tilde{W}_{r,A}P=0.
\]
 We shall construct a  decomposition of $\tilde{V}$ into a sum of local interactions
using Lemma~\ref{lemma:SV}  and  Lemma~\ref{lemma:second_order}  arriving at
\[
\tilde{V}=\sum_{r\ge 1} \sum_{A\in \calS(r)} \tilde{V}_{r,A}.
\]
Next we use $\tilde{W}$ and $\tilde{V}$ to define $W(n+1)$ and $V(n+1)$ by taking out all contributions from  squares
of size $r>L^*$ and adding these contributions  to the error Hamiltonian, that is,
\[
W(n+1)=\sum_{1\le r\le L^*} \sum_{A\in \calS(r)} \tilde{W}_{r,A},
\]
\[
V(n+1)=\sum_{1\le r\le L^*} \sum_{A\in \calS(r)} \tilde{V}_{r,A},
\]
and
\[
E(n+1)=e^SE e^{-S}+\sum_{r> L^*} \sum_{A\in \calS(r)} \tilde{W}_{r,A} + \tilde{V}_{r,A}.
\]
For the detailed analysis of these flow equations see Section~\ref{sec:theorem}

\subsection{Local decompositions of Hamiltonians}
\label{sec:dec}

In order to analyze convergence of the flow equations we shall need to set up some notations and
terminology.  Recall that $\calS(r)$ is a set of all $r\times r$ squares.
\begin{dfn}
Let $V$ be any operator acting on $\calH$. A local decomposition of $V$
is a list of operators  $\{V_{r,A}\}_{r,A}$ where $r\ge 1$ and
$A\in \calS(r)$ such that $V_{r,A}$ has support on a square $A$ and
\be
\label{ld}
V=\sum_{r\ge 1} \sum_{A\in \calS(r)} V_{r,A}
\ee
\end{dfn}
Note that a  local decomposition of an operator is not unique since the squares
involved in the decomposition overlap with each other. Nevertheless, we shall often identify
an operator and its local decomposition unless it may lead to confusions.
\begin{dfn}
A local decomposition of an operator $V$ is $(J,\mu,\alpha)$-decaying iff
\be
\label{decay}
\max_{r\ge 1} \max_{A\in \calS(r)} \| V_{r,A}\| \, r^\alpha e^{\mu r} \le J.
\ee
\end{dfn}
\noindent
Here we mean that $J,\mu,\alpha$ are some constants independent of $L$.
We shall often use a term $(J,\mu,\alpha)$-decaying operator  meaning that
this operator has a local decomposition which is $(J,\mu,\alpha)$-decaying.
To simplify notations we shall often use an abbreviation $(J,\mu)$-decay
for $(J,\mu,0)$-decay.

In the rest of this section we derive several auxiliary technical results
that can be skipped at the first reading.

We shall often need to construct a local decomposition for a commutator
$[S,V]$ given the local decompositions of $S$ and $V$.
\begin{lemma}
\label{lemma:SV}
Suppose $S$ is $(K,\mu,\alpha)$-decaying and
$V$ is $(J,\mu,\beta)$-decaying for some $\alpha,\beta\ge 4$.
Then $[S,V]$ has a local decomposition which is $(cKJ,\mu)$-decaying for some
constant $c$.
\end{lemma}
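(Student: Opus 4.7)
The plan is to build a local decomposition of $[S,V]$ by expanding in the given decompositions of $S$ and $V$ and grouping contributions by a canonical bounding box. Writing
\[
[S,V]=\sum_{r,s\ge 1}\ \sum_{A\in\calS(r),\, B\in\calS(s)} [S_{r,A},V_{s,B}],
\]
I first observe that $[S_{r,A},V_{s,B}]=0$ whenever $A\cap B=\emptyset$, and that when it is nonzero its support lies in $A\cup B$. If $A$ is $r\times r$ and $B$ is $s\times s$ and they intersect, then $A\cup B$ fits inside an axis-aligned square of side at most $r+s-1$.

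For each intersecting pair $(A,B)$ with $A\in\calS(r)$ and $B\in\calS(s)$, fix a canonical square $C=\mathrm{canon}(A,B)\in\calS(t)$ of size $t=r+s$ that contains $A\cup B$ (for instance, the lexicographically smallest such translate). Then define
\[
T_{t,C}=\sum_{r=1}^{t-1}\ \sum_{\substack{A\in\calS(r),\, B\in\calS(t-r)\\ A\cap B\ne\emptyset,\ \mathrm{canon}(A,B)=C}} [S_{r,A},V_{t-r,B}].
\]
By construction $T_{t,C}$ is supported on $C$, and rearranging the double sum gives the desired local decomposition $[S,V]=\sum_{t\ge 2}\sum_{C\in\calS(t)} T_{t,C}$.

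To bound $\|T_{t,C}\|$, I apply the triangle inequality together with $\|[X,Y]\|\le 2\|X\|\|Y\|$ and the hypothesized decay estimates to obtain
\[
\|T_{t,C}\|\le 2KJe^{-\mu t}\sum_{r=1}^{t-1} r^{-\alpha}(t-r)^{-\beta}\, N(r,t-r,C),
\]
where $N(r,s,C)$ counts the pairs $(A,B)$ assigned to $(t,C)$. Since $A\subseteq C$ there are at most $t^2$ positions for $A$, and once $A$ is fixed the constraint $A\cap B\ne\emptyset$ confines $B$ to a region of area at most $(r+s)^2\le t^2$, so $N(r,s,C)=O(t^4)$. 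A standard convolution estimate, splitting at $r=\lfloor t/2\rfloor$, gives
\[
\sum_{r=1}^{t-1} r^{-\alpha}(t-r)^{-\beta}=O(t^{-\min(\alpha,\beta)})=O(t^{-4})
\]
once $\alpha,\beta\ge 4$. Combining the two bounds yields $\|T_{t,C}\|\le cKJ\,e^{-\mu t}$, which is the asserted $(cKJ,\mu)$-decay.

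The main obstacle is bookkeeping rather than conceptual: one must pin down the canonical assignment $\mathrm{canon}(A,B)$ so that each commutator term is counted exactly once, and one must verify that the $O(t^4)$ blow-up from enumerating overlapping square pairs of prescribed combined size is precisely absorbed by the polynomial decay from $\alpha,\beta\ge 4$. The threshold $4$ in the hypothesis corresponds exactly to the two-dimensional ambient lattice, which is why it appears in the statement; a higher-dimensional version of the lemma would require proportionally stronger polynomial decay.
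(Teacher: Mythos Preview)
Your proof is correct and follows the same strategy as the paper: expand $[S,V]$ in the two local decompositions, discard pairs with disjoint supports, assign each surviving commutator $[S_{p,A},V_{q,B}]$ to a bounding square of size $p+q$, and bound the resulting terms using the decay hypotheses. The only notable difference is in the counting. The paper counts more sharply: for $C\in\calS(r)$ with $p+q=r$ there are at most $(r-p)^2=q^2$ choices of $A\subseteq C$ and $(r-q)^2=p^2$ choices of $B\subseteq C$, giving directly
\[
\|D_{r,C}\|\le 2KJe^{-\mu r}\sum_{p+q=r} p^{2-\alpha}q^{2-\beta}\le 2KJe^{-\mu r}\sum_{p,q\ge 1}p^{-2}q^{-2}
\]
for $\alpha,\beta\ge 4$, without any need for a convolution estimate. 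You instead overcount the pairs by $O(t^4)$ and then recover the lost factor via the convolution bound $\sum_{r=1}^{t-1} r^{-\alpha}(t-r)^{-\beta}=O(t^{-\min(\alpha,\beta)})$. Both routes land on the same threshold $\alpha,\beta\ge 4$ and the same $(cKJ,\mu)$-decay, so this is a cosmetic rather than structural difference.
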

\begin{proof}
Consider local decompositions of $S$ and $V$,
\be
S=\sum_{p\ge 1} \sum_{A\in \calS(p)} S_{p,A}, \quad \|S_{p,A}\| \le K p^{-\alpha} e^{-\mu p},
\ee
\be
V=\sum_{q\ge 1} \sum_{B\in \calS(q)} V_{q,B}, \quad \| V_{q,B}\| \le J q^{-\beta} e^{-\mu q}.
\ee
If $A\in \calS(p)$ and $B\in \calS(q)$ is a pair of non-overlapping squares, $A\cap B\ne\emptyset$,
then clearly $A\cup B$ can be covered by some square $C\in \calS(p+q)$.
Thus we can choose a local decomposition of $[S,V]$ as
\be
[S,V]=\sum_{r\ge 2} \sum_{C\in \calS(r)} D_{r,C},
\ee
where
\be
D_{r,C}=\sum_{p+q=r} \sum_{\substack{A\in \calS(p)\\A\subseteq C}} \; \;
\sum_{\substack{B\in \calS(q)\\ B\subseteq C}} \; \; [S_{p,A},V_{q,B}]\, \chi(A,B,C)
\ee
where $\chi(A,B,C)=0,1$ is some function that `distributes' the commutators $[S_{p,A},V_{q,B}]$
over different terms $D_{r,C}$. A specific form of this function is not important for us.
For fixed $p,q$ such that $p+q=r$ and a fixed $C\in \calS(r)$ we can bound the number of squares $A,B\subseteq C$ as
\[
\#\{ A\in \calS(p)\, : \, A\subseteq C\} =(r-p)^2=q^2
\]
and
\[
\#\{ B\in \calS(q)\, : \, B\subseteq C\}=(r-q)^2=p^2.
\]
It yields
\be
\|D_{r,C}\|\le 2KJe^{-\mu r} \sum_{p+q=r} p^{2-\alpha} q^{2-\beta} \le 2KJe^{-\mu r}\sum_{p,q\ge 1} p^{-2} q^{-2} = cKJe^{-\mu r}
\ee
for some constant $c$
provided that $\alpha,\beta\ge 4$.
\end{proof}

We shall also need the following simple lemma that will allow us to amplify the degree
$\alpha$ by any constant
by ``borrowing" some decay from the exponential function. It
makes the decay rate $\mu$ a bit smaller
and the amplitude $J$ a bit larger.
\begin{lemma}[\bf Degree Reset]
\label{lemma:reset}
Suppose $V$ is $(J,\mu,\beta)$-decaying. Let $0<\epsilon<1$ and $\alpha>0$ be any constants.
Then $V$ is also $(J',\mu',\alpha+\beta)$-decaying where
\be
J'=cJ^{1-\epsilon} \quad \mbox{and} \quad \mu'=\mu - J^{\frac{\epsilon}{\alpha}}.
\ee
Here $c$ is a constant depending on $\epsilon$ and $\alpha$.
\end{lemma}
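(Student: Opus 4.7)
The plan is to directly unpack the definition: $V$ being $(J,\mu,\beta)$-decaying means $\|V_{r,A}\| \le J r^{-\beta} e^{-\mu r}$ for every $r$ and every $A\in\calS(r)$, and I need to exhibit a (possibly different) local decomposition witnessing $(J',\mu',\alpha+\beta)$-decay, i.e.\ $\|V_{r,A}\| \le J' r^{-(\alpha+\beta)} e^{-\mu' r}$. Since no regrouping of the pieces is needed, I plan to use the \emph{same} local decomposition and simply reprove the bound with the new constants.

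Setting $\mu - \mu' = J^{\epsilon/\alpha}$ as prescribed, what must be checked is
\be
J\, r^{-\beta} e^{-\mu r} \;\le\; J'\, r^{-(\alpha+\beta)} e^{-\mu' r}
\qquad \Longleftrightarrow \qquad
J \, r^{\alpha} e^{-J^{\epsilon/\alpha} r} \;\le\; J'
\ee
for every $r\ge 1$. So the whole content of the lemma reduces to bounding $f(r) := r^{\alpha} e^{-J^{\epsilon/\alpha} r}$ uniformly in $r$. This is where the change of variables $t = J^{\epsilon/\alpha} r$ pays off: it gives $f(r) = t^{\alpha} e^{-t}\, J^{-\epsilon}$, trading the $r^\alpha$ prefactor for a factor $J^{-\epsilon}$ times a bounded universal function of $t$.

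Since $t^\alpha e^{-t}$ attains its maximum on $\RR_{>0}$ at $t = \alpha$, with value $(\alpha/e)^\alpha$, I get $f(r) \le (\alpha/e)^\alpha J^{-\epsilon}$ and therefore $J r^\alpha e^{-J^{\epsilon/\alpha} r} \le (\alpha/e)^\alpha J^{1-\epsilon}$. This proves the claimed bound with $c = (\alpha/e)^\alpha$ (depending only on $\alpha$; the statement allows a constant depending on both $\epsilon$ and $\alpha$, so there is slack). One should also remark that implicit in the statement is $J$ small enough that $\mu' > 0$, which is the regime of interest since $J$ is the perturbation scale and we think of $\mu$ as fixed.

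There is no real obstacle here: the lemma is a purely calculus-of-one-variable fact about trading an algebraic prefactor for a slower exponential decay, dressed up in the $(J,\mu,\alpha)$-decay notation. The only conceptual point worth flagging is that the exponent of $J$ one pays, namely $\epsilon$, is dictated by the scaling $\mu - \mu' \sim J^{\epsilon/\alpha}$: demanding a larger algebraic degree $\alpha$ costs a correspondingly larger power of $J$ in the rate reduction, while larger $\epsilon$ buys one more algebraic decay at the price of worsening $\mu'$ and only mildly improving $J'$. This trade-off is what later steps in the flow-equation analysis will use to upgrade algebraic factors whenever the commutator lemma (Lemma~\ref{lemma:SV}) requires a minimum degree.
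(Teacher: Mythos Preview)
Your proof is correct and essentially identical to the paper's: both bound $\|V_{r,A}\|\le J r^{-\beta}e^{-\mu r}$ by $cJ^{1-\epsilon}\,r^{-\alpha-\beta}e^{-\mu' r}$ via the elementary estimate $\max_{r>0} r^\alpha e^{-\delta r}=(\alpha/e)^\alpha\,\delta^{-\alpha}$ with $\delta=J^{\epsilon/\alpha}$. The only cosmetic difference is that the paper leaves $\delta=\mu-\mu'$ unspecified until the last step, whereas you plug in $\delta=J^{\epsilon/\alpha}$ at the outset; the constant $c=(\alpha/e)^\alpha$ you obtain is exactly the paper's $c=\max_{x\ge 0}x^\alpha e^{-x}$.
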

\begin{proof}
Indeed, let $\mu'=\mu-\delta$ where $\delta$ will be chosen later. Then
\be
\| V_{r,A}\| \le J r^{-\beta} e^{-\mu r} \le J r^{-\alpha-\beta} e^{-\mu' r} \max_{q\ge 0} q^\alpha e^{-\delta q}\le
c\,  \delta^{-\alpha}  J r^{-\alpha-\beta} e^{-\mu' r},
\ee
where $c=\max_{x\ge 0} x^\alpha e^{-x}=O(1)$ is a constant. Choosing $\delta=J^{\epsilon/\alpha}$
we achieve the desired scaling.
\end{proof}
Finally, we shall use the following trivial observation.
\begin{lemma}
\label{lemma:boundary}
Suppose $V$ is $(J,\mu,\alpha)$-decaying. Then $V$
also has a local decomposition which is $(cJe^{2\mu},\mu,\alpha)$-decaying
with an extra property that $V_{r,A}$ acts trivially on all sites that are adjacent to
the boundary of $A$. Here $c=O(1)$ is some constant.
\end{lemma}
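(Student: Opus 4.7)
The strategy is purely bookkeeping: given the original decomposition, I would simply reassign each interaction to a slightly enlarged square containing a one-site buffer shell around the original support, so that it automatically acts trivially near the new boundary.

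Concretely, starting from the hypothesized decomposition $V=\sum_{r\ge 1}\sum_{A\in \calS(r)} V_{r,A}$ with $\|V_{r,A}\|\le J r^{-\alpha} e^{-\mu r}$, I would define a new decomposition $\{V'_{s,A'}\}$ as follows. For each original square $A\in \calS(r)$, let $A'\in \calS(r+2)$ be the unique $(r+2)\times(r+2)$ square that contains $A$ together with every site adjacent to $A$; on the torus this enlargement always exists. Set $V'_{r+2,A'}:=V_{r,A}$ and declare all new interactions on squares of size $s\le 2$ to be zero. Summing recovers $V$, and by construction $V'_{s,A'}$ acts nontrivially only on the centered $(s-2)\times(s-2)$ core of $A'$, hence trivially on every site adjacent to the boundary of $A'$, which is exactly the claimed buffering property.

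To verify the new decay bound, for $s\ge 3$ I would estimate
\[
\|V'_{s,A'}\| \le J(s-2)^{-\alpha} e^{-\mu(s-2)} = Je^{2\mu}(s-2)^{-\alpha} e^{-\mu s},
\]
and then use the elementary bound $(s-2)^{-\alpha}\le 3^\alpha s^{-\alpha}$, valid for $s\ge 3$, to conclude $\|V'_{s,A'}\|\le cJe^{2\mu} s^{-\alpha} e^{-\mu s}$ with $c=3^\alpha=O(1)$, which is precisely the $(cJe^{2\mu},\mu,\alpha)$-decay asserted. There is no real analytic obstacle here: the lemma is a relabeling, and its only substance is correctly tracking the factor $e^{2\mu}$ generated by shifting $r\mapsto r+2$ in the exponential. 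The only mild point to check is well-definedness of the enlargement $A\mapsto A'$ at the extreme case $A=\Lambda$, but this is automatic because the torus has no boundary sites.
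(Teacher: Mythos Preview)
Your proof is correct and follows exactly the same approach as the paper: replace each square $A\in\calS(r)$ by the enlarged square $A'\in\calS(r+2)$ containing $A$ and its nearest neighbors, and set $V'_{r+2,A'}=V_{r,A}$. You supply more detail than the paper on the decay estimate (the paper leaves the verification of the $(cJe^{2\mu},\mu,\alpha)$-decay implicit), but the argument is identical.
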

\begin{proof}
Indeed, replace each square $A\in \calS(r)$ in the decomposition of $V$
by a square $A'\in \calS(r+2)$ that contains $A$ and all nearest neighbors of $A$.
Let $V'_{r+2,A'}=V_{r,A}$. Then $V=\sum_{r\ge 1} \sum_{A\in \calS(r)} V'_{r,A}$
is the desired decomposition.
\end{proof}

\subsection{Proof of the  main theorem}
\label{sec:theorem}

In this section we prove Theorem~\ref{thm:main}.
\begin{proof}
We shall derive simplified flow equations for a triple  of parameters $J(n)$, $J_d(n)$, and $\mu(n)$
such that $V(n)$ is $(J(n),\mu(n),\beta)$-decaying and $W(n)$ is $(J_d(n),\mu(n),\alpha)$-decaying
for all $n\ge 0$. Here $\alpha$, $\beta$  are sufficiently large constants that will be chosen later.
Our manipulations with local decompositions will typically decrease the constants $\alpha$ and $\beta$,
so after each step of the flow equations we shall need to reset these constants back to their
original values using Lemma~\ref{lemma:reset}.

Since $W(0)=0$ we can choose initial conditions
\be
J(0)=J, \quad J_d(0)=0,  \quad \mbox{and} \quad \mu(0)=\mu.
\ee
Let us prove that for any constant $0<\epsilon<1$ there exist constants $c_1,c_2,c_3>0$  such that
for all $k\ge 0$ one has
\bea
J(k+1) &\le & c_1 J(k)^{2(1-\epsilon)}, \label{flow1} \\
J_d(k+1) &\le & c_2 J(k)^{1-\epsilon}, \label{flow2} \\
\mu(k+1) &= & \frac12 \mu(k) - c_3 J(k+1)^{\frac{\epsilon}{10}}, \label{flow3} \\
\|E(k+1)\| &\le & \|E(k)\| + O(L^3) J(k) e^{-c_3 L\mu(k) } \label{flow4}
\eea
provided that   $J(0)$ is below some constant threshold value.
Note that  although $\epsilon$ can be chosen arbitrarily close to $0$,
Eq.~(\ref{flow3}) does not permit one to choose $\epsilon=0$
since otherwise the decay rate $\mu(n)$ becomes negative after $O(1)$ iterations.

Supposed we have already proved Eqs.~(\ref{flow1}-\ref{flow4}) for
$k=0,1,\ldots,n$.
Let us denote  $V\equiv V(n)$, $E\equiv E(n)$,
\[
W\equiv \sum_{k=0}^n W(k)
\]
and
\[
J_d\equiv \sum_{k=0}^n J_d(k).
\]
Since $\mu(k)$ is monotone decreasing for $k=0,\ldots,n$, we can safely assume that
$W$ is $(J_d,\mu(n),\alpha)$-decaying. Also, since $J(k)$ decreases doubly exponentially
for $k=0,\ldots,n$ we can assume (for $k>0$) that
\be
J_d=O(J_d(1))=O(J^{1-\epsilon}).
\ee
Let $S$ be the solution of the linearized block-diagonalization
problem Eq.~(\ref{Sdef}) constructed in Lemma~\ref{lemma:lbd}.
By construction $S$ is $(K,\mu(n),\beta)$-decaying, where
\be
K=\frac{c_1J(n)}{1-c_2 J_d}=O(J(n)).
\ee
Here we assumed that $J$ is sufficiently small, so that $c_2 J_d \le 1/2$.
 Note that the assumptions of Lemma~\ref{lemma:lbd} require
$\beta\ge 2$ and $\alpha\ge \beta+4$.

Recall that $\tilde{W}$ and $\tilde{V}$ are raw versions of $W(n+1)$ and $V(n+1)$
defined in Eq.~(\ref{rawW}) and Eq.~(\ref{rawV}).
From Corollary~\ref{corol:lbd}, Section~\ref{sec:lbd1}, we infer that the operator $\tilde{W}$
has a local decomposition with block-diagonal terms which is $(cJ(n),\mu(n))$-decaying.
Note that the assumptions of Corollary~\ref{corol:lbd} require $\beta\ge 2$ and $\alpha\ge \beta+4$.
The operator $\tilde{V}$ can be rewritten as
\be
\tilde{V}=[S,V]+\omega(H),
\ee
where
$H\equiv H_0+W+V$ and $\omega(H)\equiv e^S H e^{-S} - H -[S,H]$.
We can assume that $H$ is $(c,\mu(n),\beta)$-decaying where $c=O(1)$
and we assumed that  $\beta\le \alpha$.
Applying Lemma~\ref{lemma:SV} from Section~\ref{sec:dec} we get a local decomposition for $[S,V]$
which is $(c_1 J(n)^2,\mu(n))$-decaying for some constant $c_1$ provided that $\beta\ge 4$.
Applying Lemma~\ref{lemma:second_order} from Section~\ref{sec:LR} we get a local decomposition for $\omega(H)$
which is $(c_2J(n)^2,\mu(n)/2,-1)$-decaying for some constant $c_2$ provided that $\beta\ge 6$.
Summarizing, $\tilde{V}$ is $(cJ(n)^2,\mu(n)/2,-1)$-decaying where $c$ is a constant.
It will be convenient to keep the decay rates of $\tilde{W}$ and $\tilde{V}$ the same.
Thus we shall assume that $\tilde{W}$ is $(cJ(n),\mu(n)/2)$-decaying (which is a weaker
version of what we proved above). One can easily check that a choice
\be
\alpha=10 \quad \mbox{and} \quad \beta=6
\ee
satisfies conditions of all lemmas used above.

Recall that $W(n+1)$ and $V(n+1)$ are defined by taking the local decompositions  of $\tilde{W}$ and $\tilde{V}$
and removing all terms associated with squares of size larger than $L^*$.
Therefore $W(n+1)$ and $V(n+1)$ have the same decay parameters as $\tilde{W}$ and $\tilde{V}$,
that is, we get
\be
J(n+1)\le c_1J(n)^2, \quad J_d(n+1) \le c_2J(n), \quad \mu(n+1)=\frac12 \mu(n)
\ee
for some constants $c_1,c_2$.
Note that  we have not reset $\alpha,\beta$ to their original values yet.
The total number of squares of size $r>L^*$ is at most $L^3$. Thus the contribution
to the error Hamiltonian $E(n+1)$ can be estimated as
\be
\|E(n+1)\| \le \|E(n)\| + O(L^3) J(n) e^{-\mu(n)L^*/2} = \|E(n)\| + O(L^3) J(n) e^{-c_3 \mu(n)L }
\ee
for some constant $c_3$. Resetting the constants $\alpha,\beta$ using Lemma~\ref{lemma:reset}
we arrive at the desired flow equations Eqs.~(\ref{flow1}-\ref{flow4}).

Solving Eq.~(\ref{flow1}) we get
\be
J(n)\le J\left(\frac{J}{J_0}\right)^{\theta^n}, \quad \theta =2(1-\epsilon)
\ee
for some constant $J_0>0$. Note that we are free to choose the constant $\epsilon$ as small as possible.
To simplify the formulas let us set $\theta=2$.
Also note that for small enough $J_0$ we can assume that $\mu(n)$ decays exponentially with
exponent arbitrarily close to $1/2$. To simplify the formulas let us assume that
\be
J(n)\le J\left(\frac{J}{J_0}\right)^{2^n} \quad \mbox{and} \quad \mu(n)=\mu 2^{-n}.
\ee
Simple algebra shows that choosing the number of steps $n=c\log{L}$ for some constant $c$
one can achieve the bounds
\be
\label{VEbounds}
\|V(n)\|, \|E(n)\| \le poly(L) \exp{(-c\sqrt{L})}.
\ee
Here the exponent $-c\sqrt{L}$ is determined by a tradeoff between the doubly exponential decay of
$J(n)$ and the exponential decay of $\mu(n)$.
Neglecting these exponentially small errors we can assume for simplicity that
the Hamiltonian $H(n)$ contains only block-diagonal contributions, that is,
\be
H(n)=H_0+W, \quad W=\sum_{k=0}^n W(k).
\ee
Here the local decomposition of $W$ is $(J_d,0,\alpha)$-decaying with $J_d=O(J^{1-\epsilon})$,
$\alpha=10$, and all terms in this decomposition are individually block-diagonal.
In addition, by definition of $W(k)$,  this decomposition contains only squares of size $r\le L^*$.
By performing an overall energy shift and using TQO-1 we can guarantee that
every local term in the decomposition of $W(k)$ has zero restriction on the $P$ subspace
(note that it increases the strength $J_d$ at most by a factor of two).
It allows us to  apply the machinery of Section~\ref{sec:rel}. In particular,
Lemma~\ref{lemma:bd_stability} says that
$W$ is relatively bounded by $H_0$ with a constant
\be
b\le O(J_d) \sum_{r\ge 1} r^{2-\alpha} = O(J_d).
\ee
Assuming that $b<1$, Lemma~\ref{lemma:relb}
implies that the spectrum of $H_0+W$ is contained in the union of intervals $I_k=(k(1-b),k(1+b))$,
where $k=0,1,2,\ldots$.
The effect of $V(n)$ and $E(n)$ can now be taken into account using the standard perturbation
theory by considering $H_0+W$ as an unperturbed Hamiltonian and using Eq.~(\ref{VEbounds}).

Finally, notice that since the dominant contribution to $W$ comes from the first level $k=1$,
we do not really need to perform the degree reset to estimate $b$ (since the degree reset
only changes our description of an operator but does not change the operator itself).
Hence we can set  $J_d=O(J)$.
The theorem is proved.
\end{proof}

\section{Linearized block-diagonalization problem}
\label{sec:lbd}

\subsection{Statement of the problem}

Consider a pair of perturbations
\be
\label{VVV}
V=\sum_{1\le r\le L^*} \sum_{A\in \calS(r)} V_{r,A},
\ee
\be
\label{WWW}
W=\sum_{1\le r\le L^*} \sum_{A\in \calS(r)} W_{r,A},
\ee
such that all terms  $W_{r,A}$ are block-diagonal,
\be
\label{QWP=0}
QW_{r,A}P=0 \quad  \mbox{for all $r,A$}.
\ee

A linearized block-diagonalization problem can be divided into two parts.
The first part is to find an anti-hermitian  operator $S$ such that
\be
\label{lin_bd}
Q( [S,H_0+W] + V )P=0, \quad S^\dag=-S
\ee
and construct a local decomposition of $S$.
The second part is to construct a local decomposition for the transformed Hamiltonian
\be
\tilde{W}=[S,H_0+W] +V=\sum_{r\ge 1} \sum_{A\in \calS(r)} \tilde{W}_{r,A}
\ee
such that every term in the local decomposition is block-diagonal, that is, $Q\tilde{W}_{r,A}P=0$
 for all $r,A$. We solve the two parts of the problem
in Lemma~\ref{lemma:lbd} and its Corollary~\ref{corol:lbd} respectively.
Throughout this section we assume that $H_0$ is a Hamiltonian defined in Eq.~(\ref{H_0})
that obeys conditions TQO-1 and TQO-2.

\subsection{Finding the transformation $S$}
In this section we prove the following
\begin{lemma}
\label{lemma:lbd}
Let $V$ and $W$ be perturbations defined in Eqs.~(\ref{VVV},\ref{WWW},\ref{QWP=0}).
Suppose that
$V$ is $(J,\mu,\beta)$-decaying and $W$ is $(J_d,\mu,\alpha)$-decaying
such that  $\beta\ge 2$ and $\alpha\ge \beta+4$.
Then
there exist  constants $c_1,c_2>0$ such that Eq.~(\ref{lin_bd}) has
a solution $S$ which is $(K,\mu,\beta)$-decaying with
\be
K= \frac{c_1J}{1-c_2J_d},
\ee
\end{lemma}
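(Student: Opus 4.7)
The plan is to reduce the operator equation to a Sylvester-type problem for the off-diagonal block $T = QSP$, solve it by a Neumann series in $W$, and extract a local decomposition term by term. With the ansatz $PSP = QSQ = 0$ (compatible with $S^\dagger = -S$) and using $H_0 P = 0$ together with the block-diagonality $QWP = 0$, the equation $Q([S, H_0+W] + V)P = 0$ becomes
\[
(QH_0Q + QWQ)\,T \;-\; T\,(PWP) \;=\; QVP.
\]
The operator $A_0 := QH_0Q$ is invertible on $\mathrm{Im}(Q)$ with $\|A_0^{-1}\| \le 1$, because $H_0$ has nonnegative integer spectrum with spectral gap $1$. Rearranging gives the fixed-point form $T = A_0^{-1}\bigl(QVP + T\,PWP - QWQ\,T\bigr)$, which I iterate to obtain $T = \sum_{n\ge 0} T_n$ with $T_0 = A_0^{-1}QVP$ and $T_{n+1} = A_0^{-1}\bigl(T_n\,PWP - QWQ\,T_n\bigr)$.

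For convergence and the overall norm bound, I would invoke Lemma~\ref{lemma:bd_stability}: since $W$ is block-diagonal and $(J_d,\mu,\alpha)$-decaying with $\alpha \ge \beta+4$ large, $W$ is relatively bounded by $H_0$ with constant of order $J_d$ after summing per-scale contributions. Combined with $\|A_0^{-1}\|\le 1$, this contracts the iteration by $O(J_d)$ per step, so the geometric series converges and yields $\|T\| \le c_1 J/(1 - c_2 J_d) = K$, matching the target.

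For the local decomposition I work termwise. I first normalize each $V_{r,A}$ by subtracting the scalar $c_{r,A}$ furnished by TQO-1, so that $PV_{r,A}P = 0$; the net shift $\sum c_{r,A}$ is absorbed into the overall energy. For each such $V_{r,A}$ I build the zeroth-order piece $S^{(0)}_{r,A}$ by solving a \emph{local} Sylvester equation on the enlarged square $A'$ of size $r+2$, replacing $P,Q,H_0$ by $P_{A'}, Q_{A'}, H_0(A')$; Corollary~\ref{corol:tqo3} then upgrades the local cancellation against $P_{A'}$ to the required global cancellation against $P$. Each higher-order iterate is built by convolving $T_n$ with one local factor of $W$, and Lemma~\ref{lemma:SV} with $\alpha \ge \beta+4$ keeps the composite decomposition in $(\,\cdot\,,\mu,\beta)$-form while the amplitude shrinks geometrically in $n$. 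Summing over $n$ produces a local decomposition of $S$ with the announced decay constant $K$.

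The main technical obstacle is the non-locality of $A_0^{-1}$: it acts as $1/k$ on the $k$-defect syndrome sector, so a naive Neumann expansion produces a manifestly non-local formula for $S$. The resolution, and precisely where TQO-2 becomes essential, is to never invert $A_0$ globally: the Sylvester equation is solved locally on each $A'$ using $H_0(A')$ and the local projectors, and Corollary~\ref{corol:tqo3} lifts these local solutions to a global solution of the operator equation modulo already block-diagonal terms. This is what allows the abstract Neumann series to be repackaged as a genuine $(K,\mu,\beta)$-decaying local decomposition.
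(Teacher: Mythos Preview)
Your overall architecture matches the paper's: build $S=\sum_{i\ge 1} S^{(i)}$ as a Neumann series in $W$, solve the zeroth order by inverting $H_0$ \emph{locally} on each square (the map $\calE_A$ in the paper), and use TQO to lift the local solution to a global one. The geometric convergence with ratio $O(J_d)$ is also correct.

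There is, however, a genuine gap in your locality argument for the higher iterates. You claim that ``Lemma~\ref{lemma:SV} with $\alpha\ge\beta+4$ keeps the composite decomposition in $(\,\cdot\,,\mu,\beta)$-form.'' This is false. Lemma~\ref{lemma:SV} applied to $[S^{(n)},W]$ gives at best $(\,\cdot\,,\mu,0)$-decay: in its proof the bound is $\sum_{p+q=r} p^{2-\alpha} q^{2-\beta}$, and the term $p=1,\,q=r-1$ contributes $\sim r^{2-\beta}$ regardless of how large $\alpha$ is. So each iteration would cost you two powers of $r$, and after a few steps Lemma~\ref{lemma:SV} no longer even applies (it needs both exponents $\ge 4$, while you only assume $\beta\ge 2$). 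Degree reset cannot save this, since it would erode $\mu$ at every step of an unbounded iteration.

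The paper avoids this by tracking a different inductive invariant: not $\|S^{(i)}_{p,A}\|$, but $\|[H_0(A),S^{(i)}_{p,A}]\|$ (together with the structural properties that $S^{(i)}_{p,A}$ is block-off-diagonal with respect to $P_A,Q_A$ and commutes with all $G_B$ not contained in $A$). The point is that one never bounds $\|[S^{(i)}_{p,A},W_{q,B}]\|$ term by term. Instead, for the \emph{off-diagonal} block $Q_C[S^{(i)}_{p,A},\,W_q(A,C)]P_C$ one applies Corollary~\ref{corol:commutator} (relative boundedness of $W$ by the \emph{local} Hamiltonian $H_0(C)$, which is where TQO-2 enters via Corollary~\ref{corol:bd_stability}) to get
\[
\|Q_C[S^{(i)}_{p,A},W_q(A,C)]P_C\|\;\le\; O(J_d\,q^{2-\alpha}e^{-\mu q})\cdot \|[H_0(A),S^{(i)}_{p,A}]\|.
\]
This replaces the naive double counting of $B$-squares by a single relative-bound factor, so the resulting sum is $\sum_{p+q=r} q^{4-\alpha} p^{-\beta}$ rather than $\sum_{p+q=r} q^{2-\beta} p^{2-\alpha}$, and the former \emph{is} $O(r^{-\beta})$ precisely when $\alpha\ge\beta+4$ and $\beta\ge 2$. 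That is the mechanism by which the $(\,\cdot\,,\mu,\beta)$-decay is preserved through all iterations, and it is missing from your proposal. Your invocation of Lemma~\ref{lemma:bd_stability} for the global norm bound is on the right track, but you need its \emph{local} cousin, applied inside the off-diagonal block at each scale, to make the local decomposition go through.
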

\begin{proof}
Let us start from several simplifying assumptions.
Without loss of generality we can assume that
\be
\label{WP=0}
W_{r,A} P=0  \quad \mbox{for all $r,A$}.
\ee
Indeed,  condition TQO-1 guarantees that $W_{r,A} P$ is  a multiple  of $P$.
Shifting $W_{r,A}$ by the corresponding multiple of the identity
we can satisfy Eq.~(\ref{WP=0}). On the other hand, one can easily check that
Eq.~(\ref{lin_bd}) in invariant under such a shift. Note also that the shift can increase the norm
$\|W_{r,A}\|$  at most by a factor of two. Indeed, if
$W_{r,A}P=cP$ then $|c|\le \|W_{r,A}\|$ and thus $\|W_{r,A}- cI\|\le \|W_{r,A}\|+|c| \le 2\|W_{r,A}\|$.
Thus we can assume that $W$ satisfies Eq.~(\ref{WP=0}) provided that
we change $J_d$ to $2J_d$ in the final answer.
By the same reasons, we can assume that
\be
\label{PVP=0}
PV_{r,A}P=0   \quad \mbox{for all $r,A$}
\ee
provided that we change $J$ to $2J$ in the final answer.
In addition, we can assume that $V_{r,A}$ commutes with $G_B$, $B\in \calS(2)$, whenever
$B$ is not contained in $A$,
\be
\label{boundary}
[V_{r,A},G_B]=0 \quad \mbox{for all $B\in \calS(2)$ such that $B\cap A^c\ne \emptyset$}.
\ee
Indeed, by adding an idle layer of sites to each square in the decomposition of $V$ as explained in Lemma~\ref{lemma:boundary} we guarantee Eq.~(\ref{boundary}). The price we pay for this simplification is that $J$ has to be changed to
$cJe^{2\mu}=O(J)$ in the final answer, see Lemma~\ref{lemma:boundary}.
Note also that now the local decomposition of $V$ in Eq.~(\ref{VVV}) starts from squares of size $r=3$,
\be
V=\sum_{3\le r\le L^*} \sum_{A\in \calS(r)} V_{r,A}.
\ee

We shall construct a solution $S$ as a series
$S=\sum_{i=1}^\infty S^{(i)}$, where $S^{(i)}$ is anti-hermitian for all $i$ and
\bea
Q([S^{(1)},H_0]+V)P&=& 0, \label{S1} \\
Q([S^{(i+1)},H_0] + [S^{(i)},W])P &=& 0, \quad \mbox{for $i\ge 1$}. \label{Si}
\eea
For any region $A\subseteq \Lambda$ define a restricted Hamiltonian
\be
\label{localH_0}
H_0(A) = \sum_{\substack{B\in \calS(2)\\ B\subseteq A}} G_B.
\ee
Define also a super-operator $\calE_A$ that takes as argument an arbitrary operator $O$
and returns an operator
\be
\calE_A(O) = Q_A H_0(A)^{-1} O P_A - P_A O H_0(A)^{-1} Q_A.
\ee
Note that the $P_A$ is the zero-subspace of $H_0(A)$,  so that $Q_A H_0(A)^{-1}$ is well-defined.
\begin{prop}
\label{prop:EA}
Let $O_A$ be any operator acting on $A$ such that $PO_AP=0$.
Suppose $O_A$ commutes with $G_B$, $B\in \calS(2)$,
whenever $B$ is not contained in $A$. Then
\be
\label{off_diag}
Q\left( [\calE_A(O_A),H_0] + O_A  \right)P = 0.
\ee
If $O_A$ is hermitian then $\calE_A(O_A)$ is anti-hermitian.
\end{prop}
\begin{proof}
Indeed, since  all terms in $H_0$ which are not contained in $A$
commute with $\calE_A(O_A)$ while $H_0(A) P_A=0$ we have
\[
[\calE_A(O_A),H_0]=-Q_A O_A P_A - P_A O_A Q_A.
\]
It yields
\[
Q[\calE_A(O_A),H_0]P = -Q Q_A O_A P = -Q( Q_A+P_A) O_A P = -QO_A P,
\]
where the first equality follows from  $P_A P=P$, $Q_A P =0$, and the second equality
uses identity $P_A O_A P = PO_A P =0$. Thus we have proved Eq.~(\ref{off_diag}).
The last statement of the proposition is obvious.
\end{proof}
Using the assumptions Eqs.~(\ref{PVP=0},\ref{boundary}) and the proposition
we can choose $S^{(1)}$ in Eq.~(\ref{S1}) as
\be
\label{S1choice}
S^{(1)}=\sum_{r\ge 3} \sum_{A\in \calS(r)} S^{(1)}_{r,A}, \quad S^{(1)}_{r,A}= \calE_A(V_{r,A}).
\ee
Taking into account that
\be
\| \calE_A(O_A)\| \le \|O_A\| \quad \mbox{for any $O_A$}
\ee
we conclude that Eq.~(\ref{S1choice})
is a local decomposition of $S^{(1)}$ which is $(K_1,\mu,\beta)$-decaying where
\be
\label{K1}
K_1=J.
\ee
This decomposition has an extra property that $S^{(1)}_{r,A}$
is block-off-diagonal with respect to $P_A$, $Q_A$, and $S^{(1)}_{r,A}$
commutes with  $G_B$, $B\in \calS(2)$, whenever  $B$ is not contained in $A$.

Let us now solve Eq.~(\ref{Si}).
We shall assume as our induction hypothesis that $S^{(i)}$ possesses a local decomposition
\be
\label{ind1}
S^{(i)}=\sum_{p\ge 3} \sum_{A\in \calS(p)} S^{(i)}_{p,A}
\ee
such that
\begin{enumerate}
\item[\bf I1] $S^{(i)}_{p,A}$ is block-off-diagonal with respect to $P_A$, $Q_A$
\item[\bf I2]  $S^{(i)}_{p,A}$ commutes with $G_B$, $B\in \calS(2)$, whenever $B$
 is not contained in $A$
\item[\bf I3] $\| \, [H_0(A),S^{(i)}_{p,A}] \, \| \le K_i\,  p^{-\beta}  e^{-\mu p}$
\end{enumerate}
Here the coefficient $K_i$ will be determined inductively in terms of $K_{i-1}$.
Note that combining (I1), (I3) with the fact that $Q_A H_0(A)\ge I$ one gets
\[
\| S^{(i)}_{p,A} \| = \| Q_A  S^{(i)}_{p,A} P_A \| \le \| Q_A H_0(A)S^{(i)}_{p,A} P_A \| = \| \, [H_0(A),S^{(i)}_{p,A}] \, \| \le K_i\, p^{-\beta}  e^{-\mu p},
\]
that is, $S^{(i)}$ is $(K_i,\mu,\beta)$-decaying.

The base of induction is $i=1$ which we have already proved.
Our first step will be choosing a local decomposition for  the commutator $[S^{(i)},W]$
in Eq.~(\ref{Si}).
We shall need the following geometrical fact.
\begin{prop}
\label{prop:geom}
Let $A\in \calS(p)$, $p\ge 2$, and $B\in \calS(q)$, $q<L$, be any squares such that $A\cap B\ne \emptyset$.
Then there exists a square $C\in \calS(r)$, $r=\min{(p+q,L)}$, such that
$A\cup B \subseteq  C$ and $C$ contains all nearest neighbors of $B$.
\end{prop}
\begin{proof}
If $r=L$ the statement is obvious, so assume $r=p+q<L$.
Define a metric on the lattice using the $l_\infty$-norm, that is,
if $u=(u_x,u_y)$ and $v=(v_x,v_y)$ is a pair of sites, then
\[
D(u,v)=\max{\{ |u_x-v_x|,|u_y-v_y|\}}.
\]
For any region $M\subseteq \Lambda$ let $D(M)$ be the diameter of $M$, i.e., the
largest distance between a pair of sites $u,v\in M$. Clearly $D(A)=p-1$ and $D(B)=q-1$.
Since $A\cap B\ne \emptyset$ we have $D(A\cup B)\le D(A)+D(B)=p+q-2$.
Therefore, $A\cup B$ can be covered by a square $C'\in \calS(p+q-1)$.
If $C'=B$ one actually has $C'\in \calS(q)$. Now one can choose $C$
as an arbitrary square of size $r$ that contains $C'$ and all nearest neighbors of $C'$.
Suppose now that $C'\ne B$. Then  either $C'$ contains all nearest neighbors of $B$, or
$C'$  shares an edge or a corner with $B$. In the latter case, we can extend the size of $C'$
by one obtaining a square $C\in \calS(p+q)$ with the desired properties.
\end{proof}
The proposition implies that
\be
\label{Si_com}
[S^{(i)},W]=\sum_{r\ge 3} \sum_{C\in \calS(r)} D^{(i)}_{r,C},
\ee
where
\be
\label{Di}
D^{(i)}_{r,C}=\sum_{p+q=r} \; \; \sum_{\substack{A\in \calS(p)\\ A \subseteq C}}\;
\sum_{\substack{B\in \calS(q)\\ B\subseteq C}}   [S^{(i)}_{p,A},W_{q,B}]\, \chi(A,B,C)
\ee
and $\chi(A,B,C)=0,1$ is some function that `distributes' the commutators over
different terms $D^{(i)}_{r,C}$. A particular choice of such distribution is not important for us.
Using Proposition~\ref{prop:geom} we can assume that $\chi(A,B,C)=0$ unless
$A\cup B\subseteq C$ and $C$ contains all nearest neighbors of $B$.
By construction, $D^{(i)}_{r,C}$ has support on $C$,
that is, Eqs.~(\ref{Si_com},\ref{Di}) define a local decomposition of the commutator $[S^{(i)},W]$.

Using Proposition~\ref{prop:EA} we can choose a solution $S^{(i+1)}$ of Eq.~(\ref{Si}) as
\be
\label{Si1}
S^{(i+1)} = \sum_{r\ge 3} \sum_{C\in \calS(r)} S^{(i+1)}_{r,C}, \quad  S^{(i+1)}_{r,C}=\calE_C(D^{(i)}_{r,C}).
\ee
This is a local decomposition of $S^{(i+1)}$ that satisfies (I1) by definition of the map $\calE_C$. Let us check that it satisfies (I2). Indeed, let $G_F$, $F\in \calS(2)$, be such that $F$ is not contained in $C$.
Then $F$ is not contained in $A\subseteq C$ and thus $G_F$ commutes with all $S^{(i)}_{p,A}$
in Eq.~(\ref{Di}). Since for all terms $W_{q,B}$ in Eq.~(\ref{Di}) the square
$C$ contains both $B$ and the nearest neighbors of $B$, we conclude that $F$ does not overlap with $B$,
that is, $G_F$ commutes with $W_{q,B}$. Finally, $G_F$ commutes with all Krauss operators
involved in the map $\calE_C$. Thus $G_F$ commutes with $S^{(i+1)}_{r,C}$ which proves (I2).

It remains to verify that the local decomposition Eq.~(\ref{Si1}) satisfies (I3).
It will be convenient to introduce an auxiliary Hamiltonian
\be
W_q(A,C) = \sum_{\substack{B\in \calS(q)\\ B\subseteq C}} \chi(A,B,C) \, W_{q,B}.
\ee
It allows us to rewrite Eq.~(\ref{Di}) as
\be
D^{(i)}_{r,C}=\sum_{p+q=r} \; \;\sum_{\substack{A\in \calS(p)\\ A\subseteq C}}\;   [S^{(i)}_{p,A}, W_q(A,C)].
\ee
Applying  Corollary~\ref{corol:commutator}, using Eq.~(\ref{WP=0}),  and taking into account that
$W$ is $(J_d,\mu,\alpha)$-decaying we get
\bea
\label{I3proof1}
\| \, [ H_0(C), S^{(i+1)}_{r,C}]\, \| &=&
\| Q_C D^{(i)}_{r,C} P_C  \|  \le
\sum_{p+q=r} \; \;\sum_{\substack{A\in \calS(p)\\ A\subseteq C}}\;
\| Q_C [S^{(i)}_{p,A}, W_q(A,C)]  P_C \| \nn \\
&\le &
\sum_{p+q=r} \; \;\sum_{\substack{A\in \calS(p)\\ A\subseteq C}}\;
b_q \, \| Q_C [H_0(C), S^{(i)}_{p,A}] P_C \|,
\eea
where
\be
b_q \le c J_d q^{2-\alpha} e^{-\mu q}
\ee
for some constant $c$.
From (I1) we infer that $S^{(i)}_{p,A}$ is block-off-diagonal with respect to $P_A$, $Q_A$
while  (I2) implies $[H_0(C), S^{(i)}_{p,A}]=[H_0(A), S^{(i)}_{p,A}]$.
Taking into account that  $P_C=P_AP_C$ we get a bound
\be
\label{I3proof2}
 \| Q_C [H_0(C), S^{(i)}_{p,A}] P_C \| \le
  \| Q_A [H_0(A), S^{(i)}_{p,A}] P_A\| =   \| \, [H_0(A), S^{(i)}_{p,A}]\, \|  \le K_i \, p^{-\beta}  e^{-\mu p},
 \ee
where the last inequality follows from (I3).
Combining Eqs.~(\ref{I3proof1},\ref{I3proof2}) and noting that
the number of squares $A\in \calS(p)$ such that $A\subseteq C$ is equal to
$(r-p)^2=q^2$ we get
\bea
\| \, [ H_0(C), S^{(i+1)}_{r,C}]\, \| &\le & K_i \sum_{q=1}^{r-1} q^2 b_q\, (r-q)^{-\beta}  e^{-\mu (r-q)}   \nn \\
 &\le &  cJ_d K_i \, e^{-\mu r} \sum_{q=1}^{r-1}  q^{4-\alpha}(r-q)^{-\beta}.
\eea
It is convenient to split the sum over $q$ into two parts:
\be
\sum_{1\le q\le r/2}  q^{4-\alpha}(r-q)^{-\beta} \le cr^{-\beta} \sum_{q\ge 1} q^{4-\alpha} =c'r^{-\beta}
\ee
for some constants $c,c'$ since we assumed $\alpha\ge \beta+4\ge 6$.
As for the other part, we have
\be
\sum_{r/2\le q\le r-1}  q^{4-\alpha}(r-q)^{-\beta} \le cr^{4-\alpha} \sum_{q\ge 1} q^{-\beta} =c'r^{4-\alpha} \le c'r^{-\beta}
\ee
for some constants $c,c'$
since we assumed that $\beta\ge 2$ and $\alpha\ge \beta+4$.
Therefore we arrive at
\be
\| \, [ H_0(C), S^{(i+1)}_{r,C}]\, \|\le  cJ_d K_i \, r^{-\beta}  e^{-\mu r}
\ee
for some constant $c$ which
 proves (I3) for
\be
\label{Kinduction}
K_{i+1}=cJ_d K_i.
\ee
Thus all induction assumptions are proved for $S^{(i+1)}$.
By obvious reasons  $S=\sum_{i\ge 1} S_i$ is $(K,\mu,\beta)$-decaying with
\be
K\le \sum_{i\ge 1} K_i = \frac{c_1J}{1-c_2J_d}.
\ee
\end{proof}

\subsection{Local decomposition of the transformed Hamiltonian}
\label{sec:lbd1}
Lemma~\ref{lemma:lbd} has the following corollary.
\begin{corol}
\label{corol:lbd}
Let $V$, $W$, and $S$ be as in Lemma~\ref{lemma:lbd}.
Suppose $V$ is $(J,\mu,\beta)$-decaying and
$W$ is $(J_d,\mu,\alpha)$-decaying for some
$\beta\ge 2$ and $\alpha\ge \beta+4$.
Then a transformed Hamiltonian $\tilde{W}=[S,H_0+W] +V$ has a
 local decomposition
\be
\tilde{W}=\sum_{r\ge 1} \sum_{A\in \calS(r)} \tilde{W}_{r,A}
\ee
such that  $Q\tilde{W}_{r,A}P=0$ for all $r,A$.
This decomposition is $(\tilde{J}_d,\mu)$-decaying, where
\be
\tilde{J}_d \le \frac{cJ}{1-cJ_d}
\ee
for some constant $c$.
\end{corol}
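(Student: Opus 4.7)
The plan is to reuse the series $S=\sum_{i\ge 1}S^{(i)}$ constructed in the proof of Lemma~\ref{lemma:lbd} and extract the desired local decomposition of $\tilde W$ by regrouping its terms along the iteration levels. Rearranging,
\[
\tilde W \;=\; [S,H_0+W]+V \;=\; \bigl([S^{(1)},H_0]+V\bigr) \;+\; \sum_{i\ge 1}\bigl([S^{(i+1)},H_0]+[S^{(i)},W]\bigr),
\]
which is exactly the bracket structure that Proposition~\ref{prop:EA} was designed to annihilate in the $Q(\cdot)P$ sense. Property (I2) from the proof of Lemma~\ref{lemma:lbd} says each $S^{(i)}_{p,A}$ commutes with every $G_F$ supported outside $A$, so $[S^{(i)},H_0]$ already admits the local decomposition $\sum_{p,A\in\calS(p)}[S^{(i)}_{p,A},H_0(A)]$ supported on $A$; and $[S^{(i)},W]=\sum_{r,C\in\calS(r)}D^{(i)}_{r,C}$ is the local decomposition constructed there, with $S^{(i+1)}_{r,C}=\calE_C(D^{(i)}_{r,C})$ by definition. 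Pairing term by term, I would set
\[
\tilde W_{r,A} \;:=\; \bigl([S^{(1)}_{r,A},H_0(A)]+V_{r,A}\bigr) \;+\; \sum_{i\ge 1}\bigl([S^{(i+1)}_{r,A},H_0(A)]+D^{(i)}_{r,A}\bigr),
\]
supported on the single square $A$ by construction.

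Second, I would verify that each individual summand in $\tilde W_{r,A}$ is globally block-diagonal. Proposition~\ref{prop:EA} applies to each summand with $O_A=V_{r,A}$ or $O_C=D^{(i)}_{r,C}$. Its hypotheses hold because, as in the proof of Lemma~\ref{lemma:lbd}, we may assume $PV_{r,A}P=0$ and $W_{r,A}P=0$ (constant shifts justified by TQO-1), whence $PD^{(i)}_{r,C}P=0$ follows directly from $WP=0$; the required commutativity with outside $G_F$ holds by the boundary-layer convention on $V$ together with (I2) propagated to $D^{(i)}_{r,C}$ (squares $F \not\subseteq C$ cannot intersect any $B$ used in $D^{(i)}_{r,C}$ because $C$ contains the nearest neighbors of each such $B$). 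The conclusion of Proposition~\ref{prop:EA} is then precisely $Q(\text{summand})P=0$ for each summand, so $Q\tilde W_{r,A}P=0$ follows by linearity.

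Third, for the amplitude bound I would invoke the explicit identity established in the proof of Proposition~\ref{prop:EA},
\[
[S^{(i+1)}_{r,C},H_0(C)]+D^{(i)}_{r,C} \;=\; P_C D^{(i)}_{r,C}P_C+Q_C D^{(i)}_{r,C}Q_C,
\]
which gives the norm bound $\|\text{level-}(i+1)\text{ piece}\|\le\|D^{(i)}_{r,C}\|$. A double-sum estimate over $p+q=r$ identical to the one inside the proof of Lemma~\ref{lemma:lbd} (combined with $\alpha\ge\beta+4\ge 6$) yields $\|D^{(i)}_{r,C}\|\le c K_i J_d\, r^{-\beta}e^{-\mu r}$, with $K_{i+1}=cJ_d K_i$ and $K_1=J$ as in that lemma, while the level-$1$ piece is bounded directly by $\|V_{r,A}\|\le J r^{-\beta}e^{-\mu r}$. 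Summing the geometric series $\sum_i K_i\le cJ/(1-cJ_d)$ and discarding the polynomial prefactor $r^{-\beta}\le 1$ produces the claimed $(\tilde J_d,\mu)$-decay with $\tilde J_d=cJ/(1-cJ_d)$. The main difficulty is the convergence of the tower in $i$, which requires $cJ_d<1$; this is precisely the regime in which the denominator $1-cJ_d$ in the statement is positive, so it is implicit in the hypothesis.
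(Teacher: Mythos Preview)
Your proof is correct and follows essentially the same route as the paper: you regroup $\tilde W$ along the iteration levels $i$, use the identity $[\calE_C(D),H_0]+D=P_CDP_C+Q_CDQ_C$ from the proof of Proposition~\ref{prop:EA} to get both the block-diagonality and the norm bound $\|\text{piece}\|\le\|D^{(i)}_{r,C}\|$, and then sum the geometric series in $K_i$. This is exactly what the paper does.

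One small imprecision: you say that the ``double-sum estimate identical to the one inside the proof of Lemma~\ref{lemma:lbd}'' gives $\|D^{(i)}_{r,C}\|\le cK_iJ_d\,r^{-\beta}e^{-\mu r}$. That estimate in Lemma~\ref{lemma:lbd} bounded only the off-diagonal block $\|Q_C D^{(i)}_{r,C}P_C\|$ via Corollary~\ref{corol:commutator} (relative boundedness), not the full norm. A direct term-by-term bound $\sum_{p+q=r}q^2p^2\cdot K_ip^{-\beta}e^{-\mu p}\cdot J_dq^{-\alpha}e^{-\mu q}$ (this is Lemma~\ref{lemma:SV}, which the paper invokes here) yields only $\|D^{(i)}_{r,C}\|\le cK_iJ_d\,e^{-\mu r}$ when $\beta$ is as small as $2$---the $r^{-\beta}$ prefactor does not survive. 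Since you immediately discard that prefactor anyway, the conclusion $(\tilde J_d,\mu)$-decay with $\tilde J_d\le cJ/(1-cJ_d)$ is unaffected.
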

\begin{proof}
We shall use notations and techniques introduced in the proof of Lemma~\ref{lemma:lbd}.
By definition of $S$ we have
\be
\label{tildeWdecom}
\tilde{W}=\sum_{i\ge 1} W^{(i)},
\ee
where
\be
W^{(1)} =[S^{(1)},H_0]+V, \quad \mbox{and} \quad  W^{(i)}=[S^{(i)},H_0]+[S^{(i-1)},W] \quad \mbox{for $i\ge 2$}.
\ee
Let us choose the local decomposition of $W^{(1)}$ as
\be
W^{(1)}=\sum_{r\ge 3} \sum_{A\in \calS(r)} W^{(1)}_{r,A},
\ee
where
\be
W^{(1)}_{r,A}=[\calE_A(V_{r,A}),H_0] + V_{r,A}=P_A V_{r,A} P_A + Q_A V_{r,A} Q_A.
\ee
Here the last equality uses Proposition~\ref{prop:EA} (see the first equation in the proof of the proposition).
Obviously, $\|W^{(1)}_{r,A}\| \le \|V_{r,A}\|$ and thus $W^{(1)}$ is $(J,\mu,\beta)$-decaying.
As was noticed in the proof of Lemma~\ref{lemma:lbd} we can assume that $V_{r,A}$
commutes with all operators $G_B$, $B\in \calS(2)$ for which $B$ is not contained in $A$.
It means that
\be
P W^{(1)}_{r,A} = P V_{r,A} P_A = PV_{r,A}P = W^{(1)}_{r,A} P,
\ee
that is, $W^{(1)}_{r,A}$ is block-diagonal.

Recall that we have a decomposition
\be
[S^{(i)},W]=\sum_{r\ge 3} \sum_{C\in \calS(r)} D^{(i)}_{r,C},
\ee
where $D^{(i)}_{r,C}$  commutes with
all operators $G_B$, $B\in \calS(2)$ for which $B$ is not contained in $C$, see
Eqs.~(\ref{Si_com},\ref{Di}) in the proof of Lemma~\ref{lemma:lbd}.
It means that we can choose the local decomposition of $W^{(i)}$ with $i\ge 2$ as
\be
W^{(i)}=\sum_{r\ge 3} \sum_{C\in \calS(r)} W^{(i)}_{r,C},
\ee
where
\be
W^{(i)}_{r,C}=[\calE_C(D^{(i-1)}_{r,C}),H_0]+ D^{(i-1)}_{r,C} = P_C D^{(i-1)}_{r,C} P_C + Q_C D^{(i-1)}_{r,C} Q_C,
\ee
see Eq.~(\ref{Si1}).
Obviously, $\| W^{(i)}_{r,C}\| \le \| D^{(i-1)}_{r,C}\|$ and $W^{(i)}_{r,C}$ is block-diagonal.
Using the fact that $S^{(i)}$ is $(K_i,\mu,\beta)$-decaying where $K_i$ is defined
by Eqs.~(\ref{K1},\ref{Kinduction}), and using Lemma~\ref{lemma:SV}
we conclude that $W^{(i)}$ is $(cJ_d K_{i-1},\mu)$-decaying.
Using Eq.~(\ref{tildeWdecom}) we obtain a local decomposition
of $\tilde{W}$ with individually block-diagonal local terms which is $(\tilde{J}_d,\mu)$-decaying
with
\be
\tilde{J}_d=J + \sum_{i\ge 2} cJ_d K_{i-1} \le J + \sum_{i\ge 1} J(cJ_d)^i = \frac{J}{1-cJ_d}.
\ee
Finally we have to replace $J$ and $J_d$ by $O(J)$ and $O(J_d)$ to justify our
assumptions Eqs.~(\ref{WP=0},\ref{PVP=0},\ref{boundary}).
\end{proof}

\section{Lieb-Robinson bounds}
\label{sec:LR}

Let $S$ be some anti-hermitian operator and $V$ be an arbitrary operator.
We shall use notations
\bea
\tau(V)&=&e^S V e^{-S}, \nn \\
\omega(V)&=& \tau(V)-V-[S,V].
\eea
Suppose we are given some local decompositions of $S$ and $V$.
In order to get a closed system of flow equations we need to construct a local decomposition for
$\omega(V)$, see Section~\ref{sec:flow}.
The main result of this section is the following lemma.
\begin{lemma}
\label{lemma:second_order}
Suppose $S$ is $(K,\mu,\alpha)$-decaying for some $\alpha\ge 6$.
Suppose $V$ is $(J,\mu,\beta)$-decaying for some $\beta\ge 6$.
Then $\omega(V)$ has a local decomposition which is $(cJ K^2, \mu/2,-1)$-decaying
for some constant $c$.
\end{lemma}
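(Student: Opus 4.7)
The plan is to use the Taylor--remainder identity
$$
\omega(V) \;=\; \int_0^1 (1-u)\, \tau_u\!\left([S,[S,V]]\right)\, du,
$$
which one obtains by differentiating $\tau_u(V) = e^{uS}Ve^{-uS}$ twice in $u$ and using $\frac{d}{du}\tau_u(X) = \tau_u([S,X])$. This representation already exhibits the desired $JK^2$ scaling, and it reduces the lemma to two tasks: (i) produce a suitable local decomposition of the double commutator $[S,[S,V]]$, and (ii) show that the map $X \mapsto \tau_u(X)$ sends a local decomposition to a local decomposition with only a controlled loss in decay rate and an at worst polynomial inflation of the prefactor.

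For (i), I would apply Lemma~\ref{lemma:SV} twice. Since that lemma requires its inputs to have degree at least $4$ but produces output of degree $0$, I would insert one degree reset via Lemma~\ref{lemma:reset} between the two applications. The hypotheses $\alpha,\beta \ge 6$ of the present lemma leave enough slack in the polynomial prefactors to absorb the reset, at a negligible cost in the decay rate (say, at most $\mu/4$). The result is a $(c_1 JK^2,\mu_1,\gamma)$-decaying decomposition of $[S,[S,V]]$ with $\mu_1 \ge 3\mu/4$ and any degree $\gamma$ we wish.

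For (ii), I would invoke the Lieb--Robinson bound for the time evolution generated by the quasi-local anti-hermitian operator $S$. Because $S$ is $(K,\mu,\alpha)$-decaying with $\alpha \ge 6$, the standard arguments give the following: for any operator $M$ supported on a square $C \in \calS(r)$ and any $u\in[0,1]$, the conditional expectation $M^{(\ell)} := \EE_{(C^{(\ell)})^c}[\tau_u(M)]$ onto a square $C^{(\ell)} \in \calS(r+2\ell)$ satisfies
$$
\|\tau_u(M) - M^{(\ell)}\| \;\le\; c\,\|M\|\cdot \mathrm{poly}(r+\ell)\cdot e^{-\mu_{LR}\,\ell},
$$
where the Lieb--Robinson decay rate $\mu_{LR}$ can be made at least $\mu/2$ once a constant fraction of the exponential decay is absorbed into the Lieb--Robinson velocity. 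The consecutive differences $M^{(\ell)} - M^{(\ell-1)}$ then give a local decomposition of $\tau_u(M)$ supported on squares of size $r+2\ell$, and integrating against $(1-u)\,du$ preserves the estimate up to a factor $1/2$.

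Combining (i) and (ii), I would reassemble the pieces into a single local decomposition of $\omega(V)$ indexed by an outer square $C' \in \calS(r')$: for each $r'$, the contributions come from triples $(r, C, \ell)$ with $r+2\ell \le r'$ and $C^{(\ell)} \subseteq C'$. There are $O(r')$ such triples for each outer square, which is exactly what accounts for the degree $-1$ prefactor in the conclusion, and summing the geometric tail in $\ell$ together with the amplitude $c_1 JK^2$ from step (i) yields the claimed $(cJK^2,\mu/2,-1)$-decay. The main obstacle is the bookkeeping in step (ii): one must verify that the interplay between the polynomial Lieb--Robinson prefactor, the geometric sums in $\ell$, and the residual losses from the degree resets in step (i) combine to give effective decay rate $\mu/2$ with only a single power of $r'$ in the final prefactor. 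This is precisely the same trick as in Lemma~\ref{lemma:reset}, namely trading a small portion of the exponential decay for a polynomial correction, now used to cover the Lieb--Robinson velocity.
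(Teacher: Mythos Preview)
Your approach is correct but genuinely different from the paper's.

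The paper does \emph{not} use the Taylor remainder formula $\omega(V)=\int_0^1(1-u)\,\tau_u([S,[S,V]])\,du$.  Instead it works term-by-term with $O=V_{q,B}$ and localizes the \emph{generator}: for each enlarged square $B_j\supseteq B$ it defines the truncated generator $S_{B_j}$ (sum of all $S_{p,A}$ with $A\subseteq B_j$), sets $\omega_{B_j}(O)=e^{S_{B_j}}Oe^{-S_{B_j}}-O-[S_{B_j},O]$, and telescopes $\omega(O)=\omega_B(O)+\sum_{j\ge 1}(\omega_{B_j}(O)-\omega_{B_{j-1}}(O))$.  The core estimate (Lemma~\ref{lemma:QLD}) bounds $\|\omega_{B_j}(O)-\omega(O)\|\le c(q+j)q^4K^2\|O\|e^{-\mu j}$ via an ODE trick: one differentiates $f(t)=\tau^t_{B_j}(O)-\tau^t(O)-[S_{B_j}-S,O]t$ twice, notes $f(0)=\dot f(0)=0$, and controls $\ddot f$ using the Lieb--Robinson bound for the localized evolution $\tau^t_{B_j}$.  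The $K^2$ emerges from the vanishing of the first derivative, with no double commutator ever written down; the final assembly over $q+2j=r$ is what halves the decay rate to $\mu/2$ and leaves the single factor $r$.

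Your route extracts the $K^2$ explicitly from $[S,[S,V]]$ via two calls to Lemma~\ref{lemma:SV}, then localizes the full evolution $\tau_u$ by the standard conditional-expectation telescope.  This is cleaner conceptually and more modular, but it forces you to track polynomial degrees through the iterated commutator, which the paper's all-at-once estimate sidesteps.  One small correction to your write-up: the degree reset of Lemma~\ref{lemma:reset} as stated ties $\delta$ to the amplitude $J$, which would degrade $cJK^2$ to $c(JK^2)^{1-\epsilon}$.  What you actually want (and what your phrase ``at most $\mu/4$'' suggests) is the version of the reset with a \emph{fixed} $\delta$, which the proof of Lemma~\ref{lemma:reset} supplies; alternatively, observe that the proof of Lemma~\ref{lemma:SV} with inputs of degree $\ge 6$ already yields output of degree $4$, so $[S,V]$ is $(cKJ,\mu,4)$-decaying and no reset is needed before the second application.
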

Thus the magnitude of interactions of range $r$ in $\omega(V)$ decays as $cJ K^2 re^{-\mu r/2}$.
Our arguments will rely on the Lieb-Robinson bound, see~\cite{hastings-koma}. More specifically,
we shall exploit  quasi-locality of
dynamics in quantum spin systems with a fast decay of interactions as presented in~\cite{NS07}.
The extra factor $1/2$ in the decay rate of $\omega(V)$
represents a simple geometrical fact that the diameter of the light-cone of any local region
increases with a rate $2v$, where $v$ is the Lieb-Robinson velocity.
This extra factor $1/2$  is the price one has to pay for using the powerful machinery
built on the Lieb-Robinson bound.

We shall start from solving a  somewhat simpler problem.
Let $O$ be any operator with support  on some square $B\in \calS(q)$.
Consider a local decomposition of $S$,
\be
S=\sum_{p\ge 2} \sum_{A\in \calS(p)} S_{p,A}.
\ee
Let $C\in \calS(q+2j)$ be a square that contains $B$ and all sites within
distance $j$ from $B$ (with respect to the $l_\infty$-distance), that is,
\be
C=\{ u\in \Lambda\, : \, D(u,B)\le j\}.
\ee
Let $S_C$ be a localized version of $S$ obtained by taking out all interactions
whose support is not contained in $C$,
\be
S_C= \sum_{p\ge 2} \; \sum_{\substack{A\in \calS(p) \\ A\subseteq C}} S_{p,A}.
\ee
Define also a localized version of $\omega(O)$, that is,
\be
\omega_C(O)=e^{S_C} O e^{-S_C} - O - [S_C,O].
\ee
By definition, $\omega_C(O)$ has support on $C$.
We shall need a  bound on the difference $\|\omega(O)-\omega_C(O)\|$
that is proportional to $\|O\| \cdot K^2e^{-\mu j}$.
\begin{lemma}[\bf Quasi-Local Dynamics]
\label{lemma:QLD}
Let $S$, $S_C$ and $O$ be the operators defined above.
Suppose $S$ is $(K,\mu,\alpha)$-decaying for some $\alpha \ge 6$.
 Then there exist constants $c_0,c_1>0$ such that
\be
\label{LR}
\|  \omega_C(O) - \omega(O) \|  \le c_0 (q+j)q^4  K^2 \| O\|\,e^{-\mu j}
\ee
whenever $K\le c_1$.
\end{lemma}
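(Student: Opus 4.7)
The plan is to express both $\omega(O)$ and $\omega_C(O)$ via the second-order Taylor remainder
\[
\omega(O) = \int_0^1 (1-s)\, e^{sS}\,[S,[S,O]]\, e^{-sS}\, ds,
\]
(with the analogous formula for $\omega_C(O)$ obtained by replacing every $S$ with $S_C$) and then control their difference by a two-step interpolation. Inserting the intermediate quantity $\int_0^1 (1-s)\, e^{sS_C}\,[S,[S,O]]\, e^{-sS_C}\, ds$ splits $\omega(O)-\omega_C(O)$ into a \emph{conjugation-difference} piece (A), in which only the Heisenberg evolution changes, and an \emph{algebraic-difference} piece (B), in which only the double commutator changes.

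For piece (B) I would use the telescoping identity
\[
[S,[S,O]] - [S_C,[S_C,O]] = [\Delta S, [S,O]] + [S_C, [\Delta S, O]],
\]
with $\Delta S := S - S_C$. The geometric input is that $\Delta S$ consists only of local terms $S_{p,A}$ with $A \not\subseteq C$, while $O$ is supported on $B$; since $C$ contains every site within distance $j$ of $B$, any $A$ that overlaps $B$ but fails to sit inside $C$ must have linear size at least $j$. Thus only the tails $p\ge j$ of the local decomposition of $S$ contribute. The $(K,\mu,\alpha)$-decay with $\alpha \ge 6$ then yields $\|[\Delta S, O]\| \lesssim K q^2 \|O\|\, e^{-\mu j}$ after summing over squares overlapping $B$. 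A second application of $[\cdot, S]$ or $[\cdot, S_C]$ contributes one further factor $O(K q^2)$ (counting squares that intersect the enlarged support), giving the desired $O(K^2 q^4 \|O\|\, e^{-\mu j})$ bound for piece (B).

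For piece (A), I would invoke the Nachtergaele–Sims quasi-local-dynamics bound: viewing $s \in [0,1]$ as Heisenberg ``time'' generated by the quasi-local $S$, the truncation error
\[
\bigl\| e^{sS}\, X\, e^{-sS} - e^{sS_C}\, X\, e^{-sS_C} \bigr\|
\]
for an operator $X$ supported near $B$ decays as $\|X\|\, e^{v s - \mu j}$, up to polynomial prefactors, where the effective distance from $\mathrm{supp}(X)$ to $C^c$ is $\sim j$. The hypothesis $K \le c_1$ is needed precisely to keep the Lieb–Robinson velocity $v$ bounded, so that the prefactor $e^{vs}$ with $s\in[0,1]$ is a harmless constant and we genuinely obtain $e^{-\mu j}$ rather than $e^{-\mu j/2}$. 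Taking $X = [S,[S,O]]$, whose norm is $O(K^2 q^4 \|O\|)$ by the same counting as in piece (B), and integrating over $s$, one obtains the prefactor $(q+j)$ in the final estimate from the spread of the light-cone.

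The main technical obstacle will be piece (A): applying the Nachtergaele–Sims bound requires a clean local decomposition of the double commutator $[S,[S,O]]$, and one must carefully combine the $r^{-\alpha}$ prefactor, the exponential $e^{-\mu r}$, and the geometric sums coming from the Lieb–Robinson machinery. This is precisely where the hypothesis $\alpha \ge 6$ enters: each commutator with $S$ produces a polynomial factor $p^2$ from counting overlapping squares, and $\alpha$ must be large enough that these tails remain summable after two commutators. The secondary bookkeeping issue is aligning the polynomial prefactors so that the final bound collapses into the advertised clean form $(q+j) q^4 K^2 \|O\| e^{-\mu j}$.
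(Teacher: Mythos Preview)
Your plan is sound and would succeed, but the paper organizes the computation differently in a way that sidesteps precisely the obstacle you flag in piece~(A). Rather than the Taylor remainder formula, the paper introduces
\[
f(t)=\tau_C^t(O)-\tau^t(O)-[S_C-S,O]\,t
\]
so that $f(1)=\omega_C(O)-\omega(O)$ and $f(0)=\dot f(0)=0$, and then writes
\[
\ddot f(t)=[S,\dot f(t)]-[\Delta S,[S_C,\tau_C^t(O)]]-[S,[\Delta S,O]].
\]
Extracting the norm-preserving piece $[S,\dot f]$ and integrating twice yields
\[
\|f(1)\|\le \tfrac12\,\bigl\|[S,[\Delta S,O]]\bigr\|+\int_0^1\!\!\int_0^{t_1}\bigl\|[\Delta S,\tau_C^{t_2}([S_C,O])]\bigr\|\,dt_2\,dt_1.
\]
The gain over your split is that the object sitting inside the Heisenberg evolution is only the \emph{single} commutator $[S_C,O]$, each of whose pieces $[S_{p_1,A_1},O]$ is genuinely supported on $A_1\cup B$ with $A_1\cap B\ne\emptyset$; moreover $\tau_C^t$ keeps everything inside $C$. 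Hence the Lieb--Robinson bound applies directly, with no need to first produce a local decomposition of a double commutator. Your piece~(A) instead carries $[S,[S,O]]$ through the evolution difference, forcing the extra decomposition you correctly identify as the main technical hurdle.

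A minor counting remark: as you have written piece~(A), it naturally produces $K^3$ rather than $K^2$, since the Nachtergaele--Sims truncation bound itself contributes one power of the coupling (from summing $\|S_{p,A'}\|$ over the removed terms). This is harmless because the hypothesis $K\le c_1$ lets you absorb $K^3\le c_1K^2$, but the paper's arrangement gives $K^2$ on the nose: one $K$ from $[S_C,O]$ and one from $\Delta S$. Finally, the prefactor $(q+j)$ in the paper comes from counting squares straddling $\partial C$ (whose length is $\sim q+2j$), not from light-cone spread per~se.
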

\begin{proof}
Define
\be
\tau^t(O)=e^{St} O e^{-St}, \quad \tau^t_C(O)=e^{S_C t} O e^{-S_C t}.
\ee
For any $0\le t\le 1$ define an operator
\be
f(t)=\tau_C^t(O) -\tau^t(O) - [S_C-S,O]t.
\ee
Note that $f(t)$ is an analytic function  and
\be
\label{in_con}
f(0)=\dot{f}(0)=0, \quad f(1) = \omega_C(O)-\omega(O).
\ee
Computing the derivatives over $t$ we get
\be
\dot{f}(t)=[S_C,\tau_C^t(O)] - [S,\tau^t(O)] - [S_C-S,O],
\ee
\be
\ddot{f}(t)=[S_C,[S_C,\tau_C^t(O)]] - [S,[S,\tau^t(O)]].
\ee
Let us extract from $\ddot{f}(t)$ a norm preserving term $[S,\dot{f}(t)]$. After simple
algebra we get
\be
\ddot{f}(t)=[S,\dot{f}(t)] - [S-S_C,[S_C,\tau_C^t(O)]] - [S,[S-S_C,O]].
\ee
It means that
\be
\dot{f}(t) = \tau^t \left(\int_0^t \tau^{-s} \left([[S_C,\tau_C^s(O)],\Delta S_C] + [[\Delta S_C,O],S]\right) ds\right),
\ee
where $\Delta S_C\equiv S-S_C$.
Taking into account the initial conditions Eq.~(\ref{in_con}) we get
\be
\label{|f(1)|}
\|f(1)\|\le \int_0^1 dt_1 \, \|\dot{f}(t_1)\| \le
\frac12  \|[[\Delta S_C,O], S] \| +  \int_0^1 dt_1 \int_0^{t_1} dt_2\, \| [\tau_C^{t_2}([S_C,O]),\Delta S_C] \|.
\ee
Let us start from bounding the time-independent term  $\| [[\Delta S_C,O],S] \|$.
Denote
\[
\Gamma_{p_2,p_1}=\# \{(E_2,E_1)\, :\,  E_2 \in \calS(p_2), \quad E_1\in \calS(p_1), \quad  E_2\cap (E_1\cup B)\ne \emptyset, \quad E_1\cap B \ne \emptyset \}.
\]
One can easily check that
\[
\Gamma_{p_2,p_1} \le  (q+p_1+p_2)^2 (q+p_1)^2\le 2 (q+p_1)^4+2(q+p_1)^2  p_2^2.
\]
The commutator $[\Delta S_C,O]$ has contributions only from squares of size $\ge j$ in the decomposition of $S$, which
implies
\[
\| [[\Delta S_C,O],S] \| \le cK^2 \| O\| \sum_{p_1\ge j} \; \; \sum_{p_2\ge 1} \; \; \Gamma_{p_2,p_1}\,  p_1^{-\alpha} p_2^{-\alpha} e^{-\mu(
p_1+p_2)}.
\]
Here and below $c$ stands for a constant factor.
Since $\alpha \ge 6$ the sum over $p_2$ is bounded by a constant and we arrive at
\be
\label{dbcm1}
\| [[\Delta S_C,O],S] \| \le cK^2 \| O\| e^{-\mu j} \sum_{p_1\ge j} (q+p_1)^4 p_1^{-\alpha} = cq^4 K^2 \| O\| e^{-\mu j}.
\ee

Let us now  bound  $\| [\tau_C^{t_2}([S_C,O]),\Delta S_C] \|$. Note that the commutator has
contributions only from those terms in the decomposition of $\Delta S_C$ that overlap with both $C$
and its complement $C^c$. Define
\[
\Omega_{p_1,p_2} (t)= \max_{\substack{E_1\in \calS(p_1)\\ E_1\cap B \ne \emptyset}}\; \;
\max_{\substack{E_2\in \calS(p_2)\\ E_2\cap C\ne \emptyset \\E_2\cap C^c\ne \emptyset }} \; \;
\max_{O_1,O_2}\;\;
\| \, [\tau^t_C([O_1,O]),O_2] \,\|,
\]
where $O_1,O_2$ are unit-norm operators acting on $E_1,E_2$ respectively.
Counting the number of squares contributing to the double commutator yields
\[
\| \, [\tau_C^{t}([S_C,O]),\Delta S_C] \, \| \le c(q+j)K^2 \sum_{p_1,p_2\ge 1} \; \;
(q+p_1)^2 p_1^{-\alpha} p_2^{2-\alpha} e^{-\mu(p_1+p_2)} \, \Omega_{p_1,p_2}(t).
\]
As we show below, the unitary evolution under $S_C$ can be characterized by
a finite Lieb-Robinson velocity $v_{LR}=O(K)$.  Using the Lieb-Robinson bound from~\cite{NS07}
that governs unitary evolution under Hamiltonians with exponentially decaying interactions
one gets
\[
\Omega_{p_1,p_2}(t) \le c\| O\| (p_1+q)^2 p_2^2 \exp{\left[ v_{LR} t - \mu \theta(j-p_1-p_2)\right] },
\]
where $\theta(x)=x$ for $x\ge 0$ and $\theta(x)=0$ for $x<0$. Note that $\theta(j-p_1-p_2)$
is a lower bound on the distance between supports of $[O_1,O]$ and $O_2$ in the definition of $\Omega_{p_1,p_2}$.
Clearly, $p_1+p_2+\theta(j-p_1-p_2) \ge j$. Since $0\le t\le 1$ we can assume that $v_{LR} t =O(Kt)=O(1)$. Hence
\be
\label{dbcm2}
\| \, [\tau_C^{t}([S_C,O]),\Delta S_C] \, \| \le c(q+j)\|O\| K^2 e^{-\mu j} \sum_{p_1,p_2\ge 1} \; \;
(q+p_1)^4 p_1^{-\alpha} p_2^{4-\alpha}\le c(q+j)q^4 \|O\| K^2 e^{-\mu j}
\ee
for $\alpha\ge 6$.
Combining Eqs.~(\ref{dbcm1},\ref{dbcm2}) and computing the integrals in Eq.~(\ref{|f(1)|}) we arrive at
\[
|f(1)| \le c(q+j)q^4 \|O\| K^2 e^{-\mu j}.
\]

It remains to check that we have fast enough decay of interactions in $S$ to use the Lieb-Robinson bound from Ref.~\cite{NS07}.
 Below, we use notations from Ref.~\cite{NS07}.
Define a function
\be
F_\mu(x)=\frac{\exp{(-\mu x)}}{1+x^2}.
\ee
As was shown in Ref.~\cite{NS07} the Lieb-Robinson velocity is bounded by a multiple of:
\be
\| S\|_\mu:= \sup_{u,v\in \Lambda} \sum_{r\ge 1} \sum_{\substack{A\in \calS(r)\\ A\ni u,v}} \frac{\| S_{r,A}\|}{F_\mu(D(u,v))},
\ee
with the multiplying factor being $2 C_0(1)$, where $C_0(1)$ is a numerical constant (depending only on the dimensionality of the system; in our case the dimension is $2$.)
For any pair of sites $u,v\in A$ with $A\in \calS(r)$ one has $D(u,v)\le r$ (here and below we use
the $l_\infty$-distance). Since $F_\mu(x)$ is monotone-decreasing we have $F_\mu(D(u,v))\ge F_\mu(r)$.
Taking into account that the number of squares $A\in \calS(r)$ such that $A\ni u$ is at most $r^2$ we get
\be
\| S\|_\mu\le K\sum_{r\ge 1} r^2 (1+r^2) r^{-\alpha} \le c K, \quad c= \sum_{r\ge 1} (r^{-2}+r^{-4}) \le 4,
\ee
provided that $\alpha\ge 6$. Thus, the Lieb-Robinson velocity is bounded by $8 C_0(1) K$.
\end{proof}
Let us use Lemma~\ref{lemma:QLD} to construct a local decomposition for $\omega(O)$.
This local decomposition will involve a sequence of squares
\be
B_0=B \subset B_1 \subset B_2 \subset \ldots \subset \Lambda
\ee
such that $B_j\in \calS(q+2j)$ is the square obtained from $B$ by adding a boundary region
of thickness $j$ on all sides of $B$, that is,
\be
\label{B_j}
B_j=\{ u\in \Lambda\, : \, D(u,B)\le j\}.
\ee
Note that $B_j=\Lambda$ for large enough $j$.
Then we can write $\omega(O)$ as
\be
\label{omega(O)}
\omega(O)= D_{B_0}(O) + \sum_{j\ge 1} D_{B_j}(O), \quad
D_{B_0}(O) = \omega_B(O),
\quad D_{B_j}(O) = \omega_{B_j}(O) - \omega_{B_{j-1}}(O).
\ee
Note that $D_{B_j}(O)$ acts only on $B_j$, so Eq.~(\ref{omega(O)}) defines a local decomposition of $\omega(O)$.
Using Lemma~\ref{lemma:QLD} we infer that
\be
\| D_{B_j}(O)\| \le \| \omega_{B_j}(O) - \omega(O)\| + \| \omega_{B_{j-1}}(O) -\omega(O)\| \le
c (q+j)q^4 K^2  \cdot \| O\| \, e^{-\mu j}
\ee
for some constant $c$. In addition we have a standard bound (see for instance Ref.~\cite{BDLT08})
\be
\| D_{B_0}(O)\|  = \|\omega_B(O)\|  \le \frac12 \| [S_B, [S_B,O]] \| \le 2\|S_B\|^2 \|O\|.
\ee
Taking into account that
\be
\|S_B\| \le \sum_{p=2}^q \sum_{\substack{A\in \calS(p)\\A\subseteq B}} \| S_{p,A}\|
\le \sum_{p=2}^q (q-p)^2 K p^{-\alpha} e^{-\mu p} \le c q^2 K
\ee
for some constant $c$.
Thus for all $j\ge 0$ we have
\be
\label{Dbounds}
 \| D_{B_j}(O)\| \le cq^4(q+ j) K^2 \|O\| e^{-\mu j}.
\ee
 Having finished this warmup we can easily prove Lemma~\ref{lemma:second_order}.
\begin{proof}[\bf Proof of Lemma~\ref{lemma:second_order}]
Consider a local decomposition of $V$,
\be
V=\sum_{q\ge 2} \sum_{B\in \calS(q)} V_{q,B}, \quad \| V_{q,B}\| \le J q^{-\beta} e^{-\mu q}.
\ee
We construct a local decomposition of $\omega(V_{q,B})$ as in Eq.~(\ref{omega(O)}),
where $O\equiv V_{q,B}$. We get
\be
\omega(V)=\sum_{r\ge 2} \sum_{A\in \calS(r)} \Omega_{r,A},
\ee
where
\be
\Omega_{r,A} = \sum_{j=0}^{r/2-1} \; \; \sum_{\substack{B\in \calS(r-2j)\\ A=B_j}} D_{B_j}(V_{r-2j,B}).
\ee
Note that for fixed $j$ the sum over $B$ contains a single square $B$ such that $A=B_j$,
see Eq.~(\ref{B_j}).
Using Eq.~(\ref{Dbounds}) with $q=r-2j$, we arrive at:
\bea
\| \Omega_{r,A}\| &\le &  \sum_{j=0}^{r/2-1} \; \; \sum_{\substack{B\in \calS(r-2j)\\ A=B_j}}  K^2 \| V_{r-2j,B}\| (r-2j)^4 r   e^{-\mu j} \nn  \\
&\le & \sum_{j=0}^{r/2-1} K^2 J (r-2j)^{4-\beta} r  e^{-\mu(r-j)} \le  c K^2 r J e^{-\frac{\mu r}2}. \nn
\eea
for some constant $c$ provided that $\beta\ge 6$.
We have shown that $\omega(V)$ is $(cK^2 J,\mu/2,-1)$-decaying.
\end{proof}

\section{Adiabatic continuation of logical operators}
\label{sec:continue}

\subsection{Dressed Operators}

Since the gap remains open up to a certain strength of perturbation, this means that the perturbed Hamiltonian
$H_0+V$ is adiabatically connected to the original Hamiltonian, $H_0$.  This adiabatic connection suggests that the perturbed
Hamiltonian should have similar properties to the unperturbed Hamiltonian.  For example, following~\cite{hwen}, we can define
string operators which have nontrivial action in the ground state subspace of the perturbed Hamiltonian and which have the correct commutation relations and expectation values.
In  fact, Theorem~\ref{thm:main}
will allow us to do even more, to define local operators that create defect excitations with
well-defined energies.

To construct these operators,
we use quasi-adiabatic continuation.
We define a continuous family of Hamiltonians,
\be
H_s=H_0+sV,
\ee
so that as $s$ varies from $0$ to $1$, $H_s$ continuously interpolates between $H_0$ and the perturbed Hamiltonian.

We define a quasi-adiabatic continuation operator,
${\cal D}_s$ by
\be
{\cal D}_s \equiv i\int {\rm d}t F(t) \exp(i H_s t) \Bigl( \partial_s H_s \Bigr) \exp(-i H_s t),
\ee
where the function $F(t)$ is defined to have the following properties.
First, the Fourier transform of $F(t)$, which we denote $\tilde F(\omega)$, obeys
\be
\label{correct}
|\omega| \geq 1/2 \quad \rightarrow \quad \tilde F(\omega)=-1/\omega.
\ee
Second, $\tilde F(\omega)$ is infinitely differentiable, so that $F(t)$ decays faster than any negative power of time for large $|t|$.
Third, $F(t)=-F(-t)$, so that ${\cal D}_s$ is anti-Hermitian.

We define a unitary operator $U_s$ by
\be
\label{Usdef}
U_s \equiv {\cal S}' \; \exp\Bigl\{ i\int_0^s {\rm d}s' {\cal D}_s \Bigr\},
\ee
where the notation ${\cal S}'$ denotes that the above equation (\ref{Usdef}) is an $s'$-ordered exponential.

The motivation for defining the above unitary operator is contained in the following lemmas:
\begin{lemma}
\label{lemma:samespace}
Let  $H_s$ be a differentiable family of Hamiltonians.
Let
$|\Psi^i(s)\rangle$ denote eigenstates of $H_s$ with energies $E_i$.
Let $E_{min}(s)<E_{max}(s)$ be continuous functions of $s$ and
\[
I(s)=\{ \lambda \in \RR\, : \,E_{min}(s)\le \lambda\le E_{max}(s)\}.
\]
Define a projector $P(s)$ onto an eigenspace of $H_s$ by
\be
P(s)=\sum_{i\, : \, E_i \in I(s)} \; |\Psi^i(s)\rangle \langle \Psi^i(s)|.
\ee
Assume that the
space that $P(s)$ projects onto is separated from the
rest of the spectrum by a gap of at least $1/2$ for all $s$ with $0\leq s \leq 1$.
That is, any eigenvalue  of $H_s$ either belongs to $I(s)$ or  is separated
from $I(s)$ by a gap at least $1/2$.
Then, for all $s$ with $0\leq s \leq 1$, we have
\be
\label{result}
P(s)=U_s P(0) U_s^\dagger.
\ee
\end{lemma}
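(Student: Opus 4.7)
The plan is to verify that both $P(s)$ and the operator $\tilde P(s):=U_s P(0) U_s^\dagger$ satisfy the same linear first-order ODE in $s$ with matching initial data, and then invoke uniqueness. Differentiating the $s'$-ordered exponential in Eq.~(\ref{Usdef}) gives $\partial_s U_s = i\mathcal{D}_s U_s$, hence
\[
\partial_s \tilde P(s) = i[\mathcal{D}_s,\tilde P(s)],\qquad \tilde P(0)=P(0).
\]
The crux of the argument is therefore to show that $P(s)$ itself satisfies $\partial_s P(s) = i[\mathcal{D}_s,P(s)]$.

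I would do this by working in the instantaneous eigenbasis $\{|\Psi^i(s)\rangle\}$ of $H_s$. The gap hypothesis together with continuity of the eigenvalues prevents any level from crossing the boundary of $I(s)$, so $P(s)$ is smooth in $s$ and the index set in its definition is locally constant. From the definition of $\mathcal{D}_s$ and the spectral resolution of $H_s$,
\[
\langle\Psi^j(s)|\mathcal{D}_s|\Psi^i(s)\rangle = i\,\tilde F(E_i-E_j)\,\langle\Psi^j(s)|\partial_s H_s|\Psi^i(s)\rangle.
\]
For $E_i\in I(s)$ and $E_j\notin I(s)$ we have $|E_i-E_j|\ge 1/2$, and the key property Eq.~(\ref{correct}) replaces $\tilde F(E_i-E_j)$ by $-1/(E_i-E_j)$. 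Using $P(s)|\Psi^i\rangle = |\Psi^i\rangle$ and $\langle\Psi^j|P(s) = 0$ this yields
\[
\langle\Psi^j|\,i[\mathcal{D}_s,P(s)]\,|\Psi^i\rangle = \frac{\langle\Psi^j|\partial_s H_s|\Psi^i\rangle}{E_i-E_j}.
\]
Standard first-order perturbation theory, obtained by differentiating $H_s|\Psi^i\rangle=E_i|\Psi^i\rangle$ and taking the inner product with $\langle\Psi^j|$, produces exactly the same expression for $\langle\Psi^j|\partial_s P(s)|\Psi^i\rangle$. On the diagonal blocks (both indices in $I(s)$, or both in its complement), $\partial_s P(s)$ vanishes because $P^2=P$ forces $P(\partial_s P)P=0$ and analogously on the complement, while $i[\mathcal{D}_s,P]$ vanishes on those blocks by direct inspection of $\mathcal{D}_s P-P\mathcal{D}_s$. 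Consequently $\partial_s P(s)=i[\mathcal{D}_s,P(s)]$, and uniqueness of solutions of the linear ODE gives $P(s)=\tilde P(s)$, which is Eq.~(\ref{result}).

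The step that requires the most care is the use of the instantaneous eigenbasis, since individual eigenvectors $|\Psi^i(s)\rangle$ need not be smooth in $s$ when there are degeneracies inside $I(s)$ or in its complement. This turns out to be harmless because only matrix elements between the two gapped sectors enter the computation, and these depend solely on the projectors $P(s)$ and $I-P(s)$ rather than on any choice of basis inside a degenerate subspace. Absolute convergence of the integral defining $\mathcal{D}_s$, as well as its anti-Hermiticity, are immediate from $F(t)=-F(-t)$ and the super-polynomial decay of $F(t)$.
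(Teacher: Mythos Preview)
Your proof is correct and follows essentially the same route as the paper: compute $\partial_s P(s)$ via first-order perturbation theory, identify it with $i[\mathcal{D}_s,P(s)]$ using the defining property Eq.~(\ref{correct}) of $\tilde F$, and conclude by uniqueness of the linear ODE satisfied by $U_sP(0)U_s^\dagger$. If anything, you are slightly more careful than the paper in explicitly treating the diagonal blocks and in remarking that only the spectral projectors (not a smooth choice of eigenbasis) are needed.
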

\begin{proof}
By linear perturbation theory,
\begin{eqnarray}
\label{diff}
\partial_s P(s)&=&
\sum_{i\in I(s)} \;\; \sum_{j\notin I(s)} \; \;
\frac{1}{E_i-E_j} |\Psi^j(s)\rangle \Bigl(\langle \Psi^j(s)| \partial_s H(s) |\Psi^i(s)\rangle\Bigr) \langle \Psi^i(s)|+h.c
\\ \nonumber
&=&
-\sum_{i\in I(s)} \;\; \sum_{j\notin I(s)} \; \;
|\Psi^j(s)\rangle \Bigl(\langle \Psi^j(s)| \int{\rm d}t F(t) \exp(i H_s t)\partial_s H(s) \exp(-i H_s t)|\Psi^i(s)\rangle\Bigr) \langle \Psi^i(s)|+h.c \\ \nonumber
&=& i[{\cal D}_s,P_s].
\end{eqnarray}
The first equality in the above equation holds because
\begin{eqnarray}
&&\langle \Psi^j(s)| \int{\rm d}t F(t) \exp(i H_s t)\partial_s H(s) \exp(-i H_s t)|\Psi^i(s)\rangle
\\ \nonumber
&=&
\langle \Psi^j(s)| \int{\rm d}t F(t) \exp[i (E_j-E_i) t]\partial_s H(s)|\Psi^i(s)\rangle
\\ \nonumber
&=&
\tilde F(E_j-E_i)
\langle \Psi^j(s)|
 \partial_s H(s)|\Psi^i(s)\rangle,
\end{eqnarray}
where we use Eq.~(\ref{correct}) to show $\tilde F(E_j-E_i)=-1/(E_j-E_i)$ using the assumption on the gap in the spectrum that
$|E_j-E_i|\geq 1/2$.

Since $\partial_s (U_s P(0) U_s^{\dagger})=i[{\cal D}_s,U_s P(0) U_s^{\dagger}]$,
and $U_0=I$, Eq.~(\ref{result}) follows from Eq.~(\ref{diff}).
\end{proof}

This purpose for introducing this quasi-adiabatic continuation operator is to define certain ``dressed" operators, following the idea
introduced in \cite{hwen}.
Let $O_1,O_2,...$ be some operators that create defects when acting on the ground state of $H_0$.  These operators may be defined
to have certain commutation or anti-commutation requirements.  For example, if we have certain operators $O^E_i$ which create
electric defects on a given neighboring pair of sites, and operators $O^M_i$ which create magnetic defects on a given neighboring pair of plaquettes in a toric code state,
then these operators all commute with each other, except an electric and a magnetic operator anti-commute if the bond connecting the
sites and the bond connecting the plaquettes on the dual lattice intersect.
Then we define ``dressed" operators
$O_i(s)$ by
\be
O_i(s)=U_s O_i U_s^{\dagger}.
\ee
Since $U_s$ is unitary, these operators $O_i(s)$ obeys the same commutation or anti-commutation requirements:
\begin{eqnarray}
[O_i,O_j]=0 \rightarrow [O_i(s),O_j(s)]=0, \\ \nonumber
\{O_i,O_j\}=0 \rightarrow \{O_i(s),O_j(s)\}=0.
\end{eqnarray}

Let $P_n(s)$ project onto the eigenspace of $H(s)$ with energy in the interval $I_n$,
see Theorem~\ref{thm:main}. We assume that $J$ is chosen sufficiently small (see Section~\ref{subs:summary})
so that $I_n$  is separated from the rest of the spectrum by a gap at least $1/2$.
So, Lemma~\ref{lemma:samespace}  implies that  if a product of operators $O_1 O_2...O_m$ acting
any ground state of $H_0$ creates an eigenstate of $H_0$ with energy $n$, then the product of operators
$O_1(s) O_2(s) ... O_m(s)$ acting on any ground state state of $H_s$ creates a state with energy in the interval $I_n$.
To see this, note that a ground state $\Psi_0(s)$ of $H_s$ can be written as $U_s \Psi_0(0)$ for some ground state $\Psi_0(0)$ of $H_0$.
Then,
\begin{eqnarray}
P_n(s) O_1(s) ... O_m(s) | \Psi_0(s)\rangle &= &
U_s P_n(0) O_1(0) ... O_m(0) | \Psi_0\rangle \\ \nonumber
=U_s (0) O_1(0) ... O_m(0) | \Psi_0\rangle  &=&
O_1(s) ... O_m(s) | \Psi_0(s)\rangle.
\end{eqnarray}

Further, if some product  of operators $O_1 O_2 ...$ has a given expectation value in the ground state of $H_0$, then
the corresponding dressed operators have the same expectation value in the ground state of $H_s$.  For example, since a product of $\sigma^x$  around a contractable loop has expectation value unity in the ground state of the toric code, the
same product of dressed operators will have the same expectation value in the ground state of $H_s$.

The next important property we want to show is that the operators $O_i(s)$ are local.  To do these, we need a Lieb-Robinson bound for
quasi-adiabatic continuation.
Before describing this, some comments.
We have chosen to define the quasi-adiabatic continuation using ``exact" expressions.  That is, we have chosen a filter function
$F(t)$ such that its Fourier transform is exactly equal to $1/\omega$ outside a certain interval.  This is Osborne's\cite{tjo} modification of the quasi-adiabatic continuation
of \cite{hwen}.  As a result, our filter function $F(t)$ decays faster than any negative power of $t$, but does {\it not} decay exponentially
in $t$ (however, there do exist such filter functions with $|F(t)|$ decaying exponentially in a polynomial of $t$~\cite{inprep}).  This contrasts with the approach in \cite{hwen}, where an approximation was used that gave a filter function decaying
exponentially in $t^2$.  Each approach has certain advantages and disadvantages.  The approach in \cite{hwen} has the advantage
that one can often get exponentially good approximations, rather than approximations which merely decay faster than any power.
However, Osborne's modification which we use here has a few advantages also.  First, we get an exact result that $\Psi_0(s)=U_s\Psi_0(0)$ above, while
the approach in \cite{hwen} only gives approximate estimates.
Second, it is much easier to derive locality estimates.
In particular, the Lieb-Robinson bound for quasi-adiabatic continuation in the next lemma, which is the key step to prove locality of
the dressed operators,
is much easier than \cite{hwen,hall}.  The reason that this bound is so much simpler is the following:
the Fourier transform of the filter function here can be chosen to be some given bounded function of $\omega$.  In \cite{hwen,hall}, we have a parameter-dependent family of filter functions.
As this parameter is increased, the accuracy of the approximation improves,
but also the upper bound on the Fourier transform of the filter function gets worse, and hence the bound on the norm of ${\cal D}$
worsens.

The locality of the dressed operators depends on locality properties of $H_0,V$.  We require
that $H_0,V$ are both sums of
of local operators $H_{0,Z},V_Z$, where both $H_{0,Z},V_Z$ obey a bound that, for all sites $u\in \Lambda$,
\begin{eqnarray}
\label{HVloc}
\sum_{Z\ni u} \Vert H_{0,Z} \Vert |Z| \exp(\mu{\rm diam}(Z))=O(1), \\ \nonumber
\sum_{Z\ni u} \Vert V_Z \Vert |Z| \exp(\mu{\rm diam}(Z)) \leq J<\infty,
\end{eqnarray}
where $|Z|$ denotes the cardinality of $Z$, for some positive constants $\mu,J$.

Given such locality property, for any finite dimensional system
we have a Lieb-Robinson bound for $H_0+sV$ that, for any
operator $O_A$ supported on set $A$ and any $O_B$ supported on set $B$,
\be
\label{lrqad}
\| [\exp(i H_st) O_A \exp(-i H_s t),O_B] \| \leq \exp[-\mu ({\rm dist}(A,B)-v_{LR} t)] |A| \|O_A \| \|O_B \|,
\ee
where $v_{LR}$ is some constant which depends on $\mu,J$.  This bound is shown in \cite{hastings-koma}.
\begin{lemma}
Let $H_0$ and $V$ obey Eq.~(\ref{HVloc}).
Then, if $O_A$ is supported on set $A$ and $O_B$ is supported on set $B$,
\be
\| [U_s O_A U_s^\dagger,O_B] \| \leq h({\rm dist}(A,B)) |A| \| O_A \| \|O_B \|,
\ee
where $|A|$ denotes the cardinality of $A$, for $0\leq s \leq 1$,
for some function $h(l)$ which decays faster than any negative power of $l$.
Similarly,
\be
\| [U_s^\dagger O_A U_s,O_B] \| \leq h({\rm dist}(A,B)) |A| \| O_A \| \|O_B \|,
\ee

Further,
the operator
${\cal D}_s$ is a sum of operators ${\cal D}_s(Z)$, with
\be
\label{normbound}
\| {\cal D}_s(Z) \| \leq {\rm const}. \times \| V_Z  \|,
\ee
and with the property that each such operator
${\cal D}_s(Z)$ obeys, for any operator $O_B$,
\be
\| [{\cal D}_s(Z),O_B] \| \leq h'({\rm dist}(A,B)) |Z| \|V_Z \| \|O_B \|,
\ee
for some function $h'(l)$ which decays faster than any negative power of $l$.
\begin{proof}
We have
\be
{\cal D}_s = \sum_Z {\cal D}_s(Z),
\ee
where
\be
{\cal D}_s(Z)= i\int {\rm d}t F(t) \exp(i H_s t) V_Z \exp(-i H_s t).
\ee
Eq.~(\ref{normbound}) follows immediately from a triangle inequality because $\int {\rm d}t |F(t)|$ converges.

Let $O_B$ be an operator supported on set $B$.  Then
\begin{eqnarray}
\label{DlocBnd}
\| [{\cal D}_s(Z),O_B ] \| & \leq &
\int {\rm d}t |F(t)| \| [\exp(i H_s t) V_Z \exp(-i H_s t), O_B] \|.
\end{eqnarray}
For $t\leq {\rm dist}(Z,B)/2v_{LR}$, the above expression is exponentially small in ${\rm dist}(Z,B)$ by the
Lieb-Robinson bound, while
for larger $t$, the expression is bounded by $|F(t)|$, and hence this commutator decays faster thany
any negative power of ${\rm dist}(Z,B)$.

This decomposition and locality bound (\ref{DlocBnd}) implies that there is a Lieb-Robinson bound for evolution using ${\cal D}_s$ as a Hamiltonian.
Using the Lieb-Robinson bound in \cite{hastings-koma} (the use of this bound does require some geometric
properties of the lattice, which hold for any finite dimensional lattice),
we have that
\be
\label{lrb}
\| [U_s O_A U_s^\dagger,O_B] \| \leq \exp(c s) h'({\rm dist}(A,B)) |A| \| O_A \| \|O_B \|,
\ee
for some constant $c$ which depends on $F,J,\mu$ and on the geometric properties of the lattice, and for
some function $h'(l)$ which decays faster than any negative power of $l$.
Since we assume $s\leq 1$, Eq.~(\ref{lrqad}) follows from Eq.~(\ref{lrb}).
\end{proof}
\end{lemma}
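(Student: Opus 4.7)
The plan is to decompose ${\cal D}_s$ locally first, prove the norm and locality bounds for each local piece, and then feed that back into a standard Lieb--Robinson argument for the time-ordered exponential $U_s$.

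First, using $V = \sum_Z V_Z$, I would split
\[
{\cal D}_s \;=\; \sum_Z {\cal D}_s(Z), \qquad {\cal D}_s(Z) = i \int dt\, F(t)\, e^{iH_s t}\, V_Z\, e^{-iH_s t}.
\]
The norm bound $\|{\cal D}_s(Z)\| \le \mathrm{const}\cdot \|V_Z\|$ follows immediately from the triangle inequality and the fact that $\int |F(t)|\, dt < \infty$, which holds because $F$ decays faster than any negative power of $t$.

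Next, for the single-term locality bound I would combine two regimes. The Lieb--Robinson bound (\ref{lrqad}) for $H_s = H_0+sV$, which is available thanks to the locality assumption (\ref{HVloc}), gives
\[
\bigl\| [e^{iH_s t} V_Z e^{-iH_s t},\, O_B]\bigr\| \;\le\; \mathrm{const}\cdot |Z|\,\|V_Z\|\,\|O_B\|\, e^{-\mu(\mathrm{dist}(Z,B) - v_{LR}|t|)}.
\]
Plugging this into the integral defining ${\cal D}_s(Z)$, I split at $|t| = \mathrm{dist}(Z,B)/(2v_{LR})$. On the short-time part, the Lieb--Robinson factor is exponentially small in $\mathrm{dist}(Z,B)$. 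On the long-time part, I bound the commutator trivially by $2\|V_Z\|\|O_B\|$ and use the decay of $F(t)$: since $|F(t)|$ falls faster than any power of $t$, its integral tail from $\mathrm{dist}(Z,B)/(2v_{LR})$ to infinity also decays faster than any negative power of $\mathrm{dist}(Z,B)$. Adding the two contributions yields a bound of the claimed form $h'(\mathrm{dist}(Z,B))\,|Z|\,\|V_Z\|\,\|O_B\|$ with $h'$ sub-polynomial.

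Finally, to bound $\|[U_s O_A U_s^\dagger, O_B]\|$, I would view ${\cal D}_s$ as a (time-dependent) Hamiltonian whose local terms ${\cal D}_s(Z)$ satisfy (i) a norm bound proportional to $\|V_Z\|$, and (ii) the sub-polynomial locality bound just proved. Since $V$ satisfies (\ref{HVloc}) with exponentially decaying reproducing function, the family $\{{\cal D}_s(Z)\}_Z$ inherits a reproducing-function bound of the form required by the Hastings--Koma framework (the required sum converges because the ${\cal D}_s(Z)$ are labeled by the same supports $Z$ as $V_Z$, and the extra loss from using $h'$ instead of pure exponential decay is absorbed into a still-faster-than-polynomial kernel). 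Applying their Lieb--Robinson bound to the $s'$-ordered exponential over $s'\in[0,s]$ and using $s\le 1$ gives
\[
\bigl\| [U_s O_A U_s^\dagger, O_B]\bigr\| \;\le\; h(\mathrm{dist}(A,B))\,|A|\,\|O_A\|\,\|O_B\|,
\]
with $h$ decaying faster than any negative power. The same argument applied to $U_s^\dagger$ (which is generated by $-{\cal D}_s$ under the reversed ordering) gives the symmetric statement.

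The main obstacle is the weaker-than-exponential decay of the filter: because $F(t)$ is only sub-polynomial rather than sub-exponential in $t$, the locality kernel $h'$ for ${\cal D}_s(Z)$ is not exponential, and one must verify that the standard Lieb--Robinson machinery still applies with this weaker reproducing function. The key observation is that the supports $Z$ of the ${\cal D}_s(Z)$ are the same as those of $V_Z$, so the sum over supports is controlled by (\ref{HVloc}); only the commutator-versus-distance decay is degraded, and this degradation is preserved (still faster than any polynomial) under the iterated Lieb--Robinson series.
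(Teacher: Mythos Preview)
Your proposal is correct and follows essentially the same route as the paper: decompose ${\cal D}_s=\sum_Z {\cal D}_s(Z)$, bound $\|{\cal D}_s(Z)\|$ via $\int|F(t)|\,dt<\infty$, split the commutator integral at $|t|=\mathrm{dist}(Z,B)/(2v_{LR})$ to get sub-polynomial locality of ${\cal D}_s(Z)$, and then invoke the Hastings--Koma Lieb--Robinson bound for the $s$-ordered exponential. Your additional remarks on the trivial long-time bound and on why the sub-polynomial kernel still fits the Lieb--Robinson framework only make explicit what the paper leaves implicit.
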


Eq.~(\ref{lrqad}) implies that $O_A(s)=U_s O_A U_s^\dagger$ can be approximated by an operator $O_l$ localized on the set of
sites within distance $l$ of set $A$.  To see this, define $O_l=\int {\rm d}U U O_A(s) U^\dagger$, where the integral ranges over
unitary rotations, with Haar measure, supported on sites with distance greater than $l$ from set $A$.  Then, following \cite{bhv}, the desired result follows.

The string operators can be used to manipulate the ground states.  We will assume that the operators
$O_1$ which create defects are, in fact, unitaries.
For example, if a certain product of electric operators,
$O_1 O_2 ...$ creates a nontrivial loop around the torus and has nontrivial action on the ground state of $H_0$, then the
product of dressed operators $O_1(s) O_2(s) ...$ has the same action on the ground state of $H_s$.  However, one might
wonder how to create these dressed operators; it would be preferable not to have to solve quasi-adiabatic evolution equations
to calculate what the operators should be, and then to produce them by careful control of a time-dependent Hamiltonian.
Fortunately, we can use the gaps in the excited state spectrum to argue that it is possible to drag defects in the perturbed
Hamiltonian $H_s$ in an identical way to what is done in $H_0$.  First, we apply some local operator in an attempt to
create a defect pair on neighboring sites.  If this operator is not exactly equal to the desired dressed operator, then
this attempt may fail, in the sense that the energy may not fall into the desired range $I_2$,
see Theorem~\ref{thm:main}.  We can detect this failure
by measuring the energy, and cooling back to the ground state: that is, if there are extra defects created, we will drag them together, using the procedure described in the
next paragraph, to annihilate them.
Since the dressed operator is local, then some local operator (for example,
the undressed operator) is expected to have  non-vanishing matrix elements between
the ground state and the desired state.
Eventually we will succeed in creating the defect pair.

We now try to move the defect pair by weak changes in the Hamiltonian.  We
add a perturbation $U$ to the Hamiltonian with norm bounded by $1/4$.  Such a weak perturbation will keep the gap open to
the states with $3$ defects (of course, there are no such three defect states in the toric code, but in more general models there are).
So, by adding this perturbation and changing it adiabatically, we remain in the subspace with a defect pair.

In a toric code Hamiltonian, every state in this subspace is a linear combination of states which are given by
acting on a ground state $\Psi_0(s)$ by a linear  combination of products of strings of dressed operators.
Consider a given state created by a single string of dressed operators, with endpoints $i$ and $j$ which are far separated.  Call
this state $\Psi(S)$, where $S$ is a string with endpoints at $i,j$.
Note that using the property that any contractable loop of dressed operators has expectation value unity in the ground state
of $H_s$, we can show that we can deform these strings in any way that leaves the endpoints fixed while leaving the state unchanged.
Let us now see how the expectation value of local operators can change in the state $\Psi(S)$.  If $O$ is an operator
which is far separated from $i,j$, we claim that the expectation value of $O$ in state $\Psi(S)$ is close to its value in the ground
state.  To see these, note that that we can define a state $\Psi(S')=\Psi(S)$ where
$S'$ is a string which stays far from $O$.  So, without loss of generality, we
can assume that string $S$ is far from $O$.  Then, the using the locality
properties of the dressed operators, they almost commute with $O$, and so
\begin{eqnarray}
\langle \Psi(S)| O | \Psi(S) \rangle &=& \langle \Psi_0(s)|O_n^\dagger ...O_1^\dagger O O_1 ... O_n| \Psi_0(s) \rangle\\ \nonumber
&\approx &
\langle \Psi_0(s)|O_n^\dagger ... O_1^\dagger O_1 ... O_n O| \Psi_0(s) \rangle \\ \nonumber
&=& \langle \Psi_0(s)| O | \Psi_0(s) \rangle.
\end{eqnarray}

So, the perturbation $U$ can effect the state, but only if $U$ acts close to one of the endpoints of the string.  Further, if $U$ is
local, then one may show
that $U$ will have small matrix elements except between $\Psi(S)$ and $\Psi(S')$ for strings
$S'$ which differ from $S$ only by small motion of one of the endpoints.
Thus, if $U$ is chosen to have the property that it reduces the energy when an end of the string
is close to a certain point, the operator $U$ can indeed by used to drag the defects.

So, we have established that it is possible to create a defect pair and drag it.  The gap to the rest of the spectrum
prevents additional defects from being created.  If they are created, we can detect them by measuring the energy, and we can drag
them to destroy them and correct errors.

We can drag one of the defects around the system.  Since at every step we remain in a state which is a linear combination of states $\Psi(S)$, and eventually we succeed in dragging
a defect all
the way around the system, if we are able to return to a ground state $\Psi'_0(s)$, then
$\Psi'_0(s)$ is produced by
a local operation $O$ on a state $\Psi(S)$ where $S$ is a string with two nearby endpoints connected by a string that goes around the sample.  Any such
state $\Psi(S)$ is
a product of dressed operators acting on $\Psi_0(s)$.  So, it is equal to $\Psi(S)=U_s O_1...O_n \Psi_0(0)$.  Suppose $\Psi'_0(s)$ is a ground state such that
$\langle \Psi'_0(s), O \Psi(S) \rangle$
is non-negligible.  This expectation value is equal to
\be
\langle \Psi'_0(0)| \Bigl( U_s^\dagger O U_s \Bigr) O_1 ... O_n |\Psi_0(0) \rangle.
\ee
However, due to the locality properties of quasi-adiabatic continuation, the operator
$U_s^\dagger O U_s$ is approximately local.  Hence, the ground state $\Psi'_0(0)$ of
the unperturbed system is connected by a local operator to the state $O_1 ... O_n |\Psi_0(0) \rangle$.  So, the state $\Psi'_0(0)$ must be close to the state formed by acting on
$\Psi_0(0)$ with a noncontractible string that winds around the sample.  Thus, by
dragging the defects in the perturbed system, we succeed in effecting the desired
transformation on the ground state sector of the perturbed system.

One thing that can go wrong in this procedure is that we might inadvertently create the wrong type of defect.  We might accidentally
create a magnetic defect pair rather than an electric defect pair (one may show that the perturbation $U$ will have small amplitude
to change electric into magnetic defects if the defects are far separated and $U$ is local).  However, we expect to be able to locally tell the difference between these types of defects
and thus determine what kind of defect has been created, and, if the wrong type has
been created, to destroy the defect pair by bringing them together.

One would like to improve the arguments here to a full formal proof that using classical
control it is possible to correct local errors and perform controlled operations in this
perturbed toric code.  We will leave a complete proof of this to the future, but we
have established the existence of operators with the necessary properties and
of the gaps in the spectrum.

\section*{Acknowledgments}
We thank Barbara Terhal and David DiVincenzo for useful discussions.
Part of this work was done while SB and SM  were visiting
the Erwin Schr\"odinger International Institute for Mathematical Physics
at Vienna.
SB was partially  supported by the
DARPA QUEST program under contract number HR0011-09-C-0047.
SM thanks the organizers of the program on ``Quantum Information Science"
at the KITP at UC Santa Barbara, where part of this work was completed.  SM was supported
by NSF Grant DMS-07-57581 and DOE Contract DE-AC52-06NA25396.

\end{document}